\def\mc{\mathcal}
\def\R{\mathbb{R}}
\begin{document}

\title{Filament Plots for Data Visualization}

\author{\name Nate Strawn \email nate.strawn@georgetown.edu \\
       \addr Department of Mathematics and Statistics\\
       Georgetown University\\
       Washington, D.C. 20037, USA}

\editor{}

\maketitle

\begin{abstract}
The efficiency of modern computer graphics allows us to explore collections of space curves simultaneously with ``drag-to-rotate" interfaces. This inspires us to replace ``scatterplots of points" with ``scatterplots of curves" to simultaneously visualize relationships across an entire dataset. Since spaces of curves are infinite dimensional, scatterplots of curves avoid the ``lossy" nature of scatterplots of points. In particular, if two points are close in a scatterplot of points derived from high-dimensional data, it does not generally follow that the two associated data points are close in the data space. Andrews plots provide scatterplots of curves that perfectly preserve Euclidean distances, but simultaneous visualization of these graphs over an entire dataset produces ``visual clutter" because graphs of functions generally overlap in 2D. We mitigate this ``visual clutter" issue by constructing computationally inexpensive 3D extensions of Andrews plots. First, we construct optimally smooth 3D Andrews plots by considering linear isometries from Euclidean data spaces to spaces of planar parametric curves. We rigorously parametrize the linear isometries that produce (on average) optimally smooth curves over a given dataset. This parameterization of optimal isometries reveals many degrees of freedom, and (using recent results on generalized Gauss sums) we identify a particular member of this set which admits an asymptotic ``tour" property that avoids certain local degeneracies as well. Finally, we construct unit-length 3D curves (filaments) from Bishop frames induced by 3D Andrews plots. We conclude with examples of filament plots for several standard datasets\footnote{Code at \href{https://github.com/n8epi/filaments}{https://github.com/n8epi/filaments} and examples at \href{https://n8epi.github.io/filaments/}{https://n8epi.github.io/filaments/}}, illustrating how filament plots avoid ``visual clutter".\\
\end{abstract}

\begin{keywords}
  Andrews plot, Gauss sum, Bishop frame, data visualization
\end{keywords}

\section{Introduction}

A traditional tool in the data scientist's toolbox is the use of a ``scatterplot". It allows one to identify relationships (if any) between two quantitative variables, and make a statistical determination of the model that should be used in analyzing the data. Advances in computing capabilities allow for much more. Instead of illustrating approximate or incomplete similarities and differences of high-dimensional datasets using scatterplots of points in the plane, we suggest illustrating exact similarities and differences using ``scatterplots of curves in space" or ``filament plots". Modern computer graphics allows us to interactively view simultaneous plots of substantial collections of space curves using a ``drag-to-rotate" interface, and this paper introduces a way to compute aesthetically coherent space curves from a dataset using little more than the SVD of a data matrix. Of course, this may be coupled with fast Johnson-Lindenstrauss transforms \cite{ailon2009fast, jin2019faster} when the dimension of a dataset is prohibitively large. Figure \ref{fig:irisAndrews2D} depicts the graphs of our curves for the Iris dataset \cite{anderson1936species} (a 4D dataset of metric measurements of different iris flowers). 
\begin{figure}[ht]
\centering
\includegraphics[scale=0.6]{./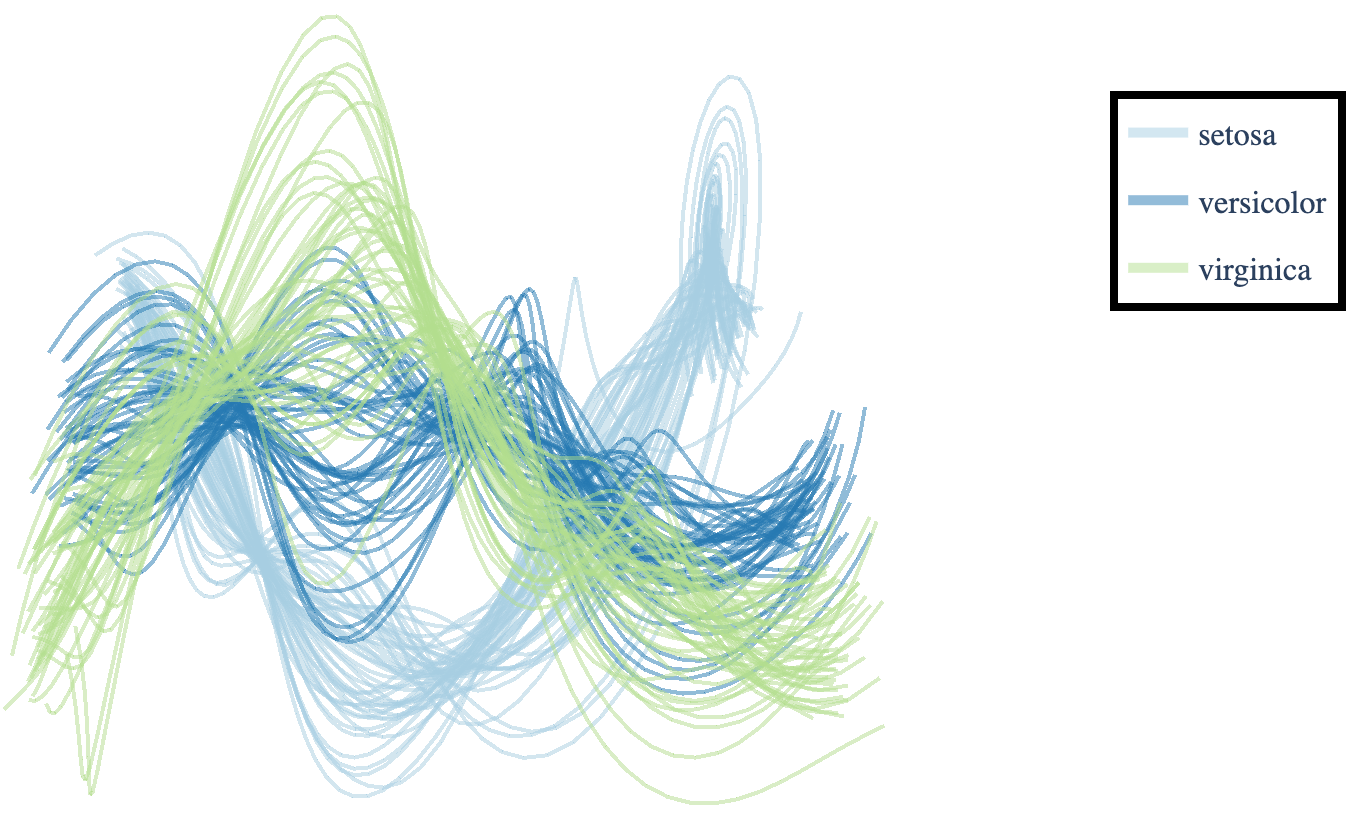}
\caption{The graphs of the 3D Andrews plots for the Iris dataset.}
\label{fig:irisAndrews2D}
\end{figure}

Because the plots of Figure \ref{fig:irisAndrews2D} are graphs of 2D curves with a third axis for the time parameter, tangent spaces of the curves always partially align. This in combination the fact that the 2D curves are trigonometric polynomials tends to produce plots that ``braid" together and partially occlude one another at any viewing angle. If they evolved in ``bushy" manner, then the ``visual clutter" induced by partial occlusions might be reduced. For the additional cost of computing numerical solutions to Bishop frame equations \cite{bishop1975there} for each of these curves, we perform a non-linear transformation of the Andrews curves to obtain the filament plots of Figure  \ref{fig:irisFilament}. In particular, these filament plots are ``bushy", and hence exhibit less ``visual clutter" along many viewing angles.  Since the images in this paper do not support ``drag-to-rotate" interfaces, we encourage interested readers to view the examples on Github\footnote{\href{https://n8epi.github.io/filaments/}{https://n8epi.github.io/filaments/}}. 

\begin{figure}[ht]
\centering
\includegraphics[scale=0.6]{./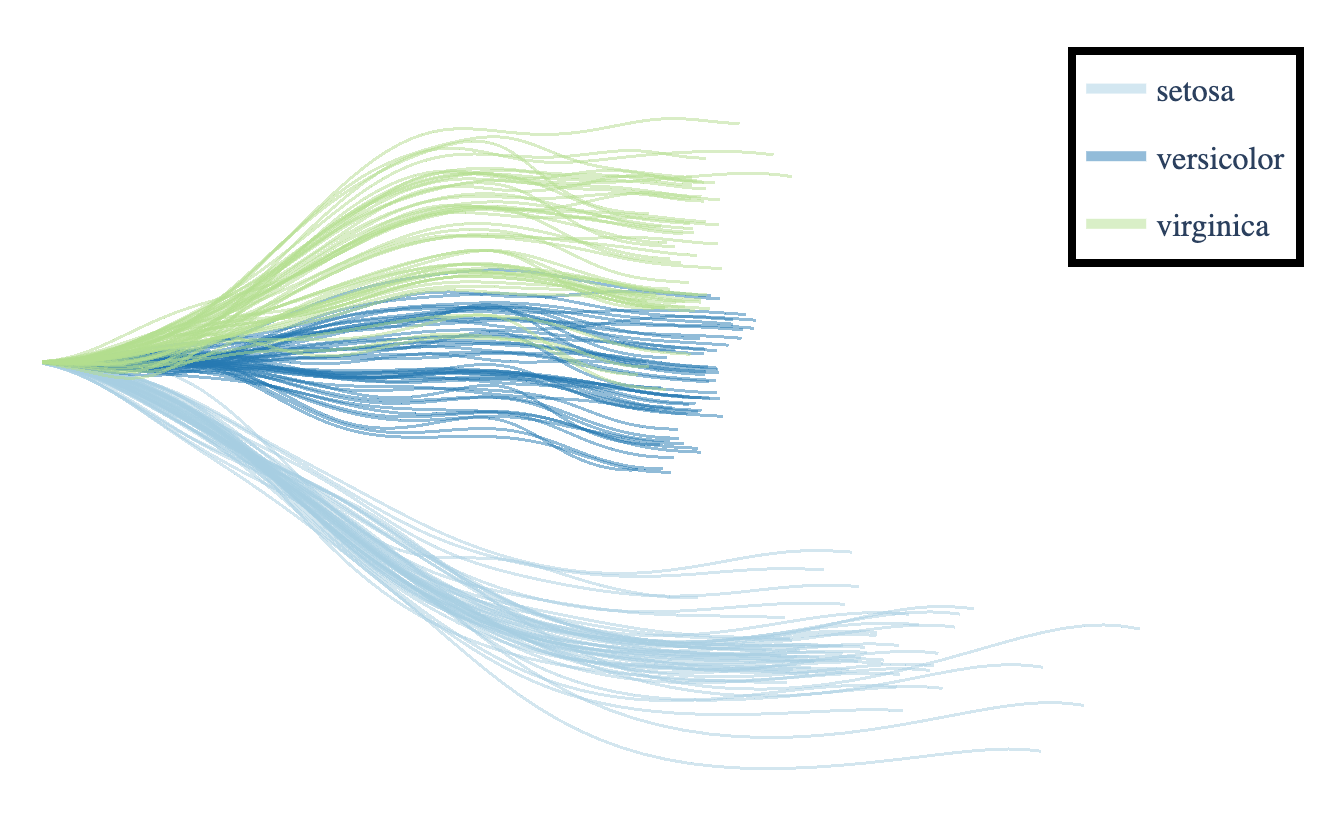}
\caption{The Filament plot for the Iris dataset obtained by solving Bishop frame equations using the 3D Andrews plots as ``symmetric curvature functions". Note that this is just a single perspective, so interested readers should view the code to see the plot in a ``drag-to-rotate" interface.}
\label{fig:irisFilament}
\end{figure}

\paragraph{Defining ``visual clutter".} We used the term ``visual clutter" several times already and we continue to invoke ``visual clutter" to justify many choices throughout this paper. A quantitative approach to ``visual clutter" remains outside the scope of this paper, but a qualitative description of ``visual clutter" is obtained by considering two curves from different data points. If the curves are well separated for their entire length, then we may say that the full image of these curves is not cluttered. If the curves overlap or follow each other closely for much of their length, then we consider the curves to be cluttered. In particular, if an observer stands at a distance from the visualization of the two curves, the observer may not distinguish between the curves as they separate. 

For example, the setosa and virginica examples produce well-separate curves in Figure \ref{fig:irisFilament}, so it is easy to visually identify the sets of curves from the two classes. However, there are some versicolor and virginica examples that require more attention to separate. Qualitatively, we say that these examples are cluttering each other. Overall, we say that the image exhibits low visual clutter.

On the other hand, many of the individual curves in the 3D Andrews plots of Figure \ref{fig:irisAndrews2D} cross many of the other curves in the visualization. Because of this, the separation of the setosa and virginica examples is not immediately clear. Also, most of the verisicolor examples cross many of the virginica examples. Overall, we say that this image exhibits a fair amount of visual clutter.

\paragraph{Contributions.} 

In this work, we first identify several conditions on linear operators from Euclidean data spaces to spaces of plane curves that encourage low ``visual clutter":
\begin{enumerate}
\item Isometry: the Euclidean distances between any two data points should match the ``distance" between the curves that produce under the mapping
\item Global non-degeneracy of curves: curves should expand out in different directions in the plane, and should not ``flatten" out across the entire dataset.
\item Local non-degeneracy of curves: curves should expand out in different directions locally as well.
\item Optimal smoothness of curves: curves should exhibit some degree of smoothness to ensure visual clarity (the formal definition of ``optimal" is expressed in Theorem \ref{thm:1}
\item Interpretability of distances: the distances in the space of plane curves should be visually intuitive.
\end{enumerate}

Having specified constraints that ensure forms of these properties hold, we parameterize the linear operators that satisfy the isometry property, global non-degeneracy property, interpretability property, and the optimal smoothness property (Theorem \ref{thm:1}). While the members of this parametric set do not satisfy the local non-degeneracy property, we identify a particular linear operator from this parameterization that enjoys an asymptotic local non-degeneracy property (Theorem \ref{thm:2} provides quantitative bounds). Finally, we observe that local tangent correlations contribute to the ``braiding" of these 3D Andrews plots, which in turn motivates our construction of unit-length space curves (filaments) using the plane curves as data for a system of equations related to the Bishop frame equations \cite{bishop1975there}. 

\paragraph{Organization of this paper.} Section \ref{sec:background} provides background on Andrews plots and related methods, then introduces some notation to provide a technical exposition of ideal properties of 3D Andres plots, and concludes by providing a technical summary of our main results. Section \ref{sec:main} provides a full technical specification of our main results,  Theorem \ref{thm:1} and Theorem \ref{thm:2}. Section \ref{sec:param} provides the proof of Theorem \ref{thm:1}, Section \ref{sec:tour} establishes the proof of Theorem \ref{thm:2}, and Section \ref{sec:filament} describes the non-linear procedure for producing space curves via Bishop frames. Sections \ref{sec:param} and \ref{sec:tour} both conclude by discussing related technical results and ideas. Section \ref{sec:filament} concludes with visualizations of the Boston housing dataset, Wisconsin breast cancer dataset, and a dataset of handwritten digits. Section \ref{sec:conclusion} concludes the paper with a discussion of interesting ramifications, caveats, and potential future directions. 

\section{Background and preliminaries}\label{sec:background}

 2D scatterplots derived from $d$-dimensional data must generally compromise distances or similarities between $d$-dimensional data points. Practically, this means that visual associations in a scatterplot may be misleading. If a dataset $\{x_n\}_{n=1}^N\subset\R^d$ is mapped to $\R^2$ ($x_n\mapsto \Phi(x_n)$), it is relatively straightforward to ensure that there is a reasonably small constant $\delta\geq 0$ such that
\begin{align}
\Vert \Phi(x_n)  - \Phi(x_m)\Vert \leq (1+\delta)\Vert x_n-x_m\Vert \text{ for all } n,m\in[N],
\end{align}
where $\Vert \cdot\Vert$ denotes Euclidean norms and $[N]=\{1,2,\ldots, N\}$ is the $N$-set. Therefore, if two points are close in the data space, then they must be visibly close in the scatterplot. Equivalently, if two points are far away in the scatterplot, then the points must be far away in the data space. 

While we would also like to say that ``if points are close in the scatterplot, then they are close in the data space", this property fails dramatically. In technical terms, bounds of the form 
\begin{align}
\Vert x_n-x_m\Vert\leq (1+\delta^\prime)\Vert \Phi(x_n)  - \Phi(x_m)\Vert \text{ for all } n,m\in[N],
\end{align}
often require a large $\delta^\prime\geq0$. This means that close points in the scatterplot may come from distant points in the data space, and so scatterplots may mislead our visual intuition.

A linear map from $\R^d$ to $\R^2$ which minimizes mean distorting of vectors is obtained by considering the first two principal components of a dataset, but this map give little control over $\delta^\prime$ in the bound above, and these maps also suffer from robustness issues. Many non-linear embedding methods (often predicated on preserving only certain distances) have been introduced to mitigate distortion. Nonlinear multidimensional scaling \cite{kruskal1978multidimensional} attempts to optimize embeddings of datasets to mitigate distortion of distances, stochastic neighborhood embedding (t-SNE) \cite{van2008visualizing} seeks to preserve similarities according to a probabilistic model, \cite{brand2003unifying} indicates that spectral embedding methods preserve angles, and UMAP \cite{mcinnes2018umap} applies category theory, fuzzy logic, and Laplacian eigenmaps to provide approximations to manifolds. Despite the flexibility provided by these methods, they still distort Euclidean distances when the data space dimension $d$ and the number of data samples $N$ is large.

\subsection{Distance-preserving data visualizations}

Instead of mapping each data point to a single point in 2D or 3D, modern computer graphics allows us to visualize maps that send each data point to collections of points. This collection of points could be discrete, or the collection could constitute a function's graph. In particular, the distance in the data space may be completely preserved by a suitable distance between collections of points in low-dimensional space, which strengthens the connection between the data space and the visualization space. 

 In general, the available methods trade visual complexity for data fidelity by merging multiple scatterplots from various projections. These methods include matrix of scatterplots (also known as pairs plots or draftsman's plot) \cite{hartigan1975printer, emerson2013generalized, tukey1981graphical}, parallel coordinates plots/radar charts \cite{inselberg1985plane, von1877gesetzmassigkeit}, Andrews plots \cite{andrews1972plots}, projection pursuit \cite{friedman1974projection}, and tours \cite{asimov1985grand}. The first three methods yield ``static" visualizations that are fixed for all time, while tours provide ``dynamic" visualizations that consist of ``movies" of scatterplots. 

\paragraph{Matrix of scatterplots.} An example is provided by Figure \ref{fig:irisScatterplots} for the Iris dataset \cite{anderson1936species} (a 4D dataset of metric measurements of various iris specimen). Matrix of scatterplots visualizations seemingly retain all information about the dataset, but suffer from an identifiability problem between points in each of the individual scatterplots. This becomes a severe issue when we consider dozens of dimensions since information about a single data point is scattered across the visualization. In fact, the pieces of visual information pertinent to a particular data point are spread uniformly throughout the plot as the dimension increases. Matrix of scatterplots visualizations also introduce bias by only considering a small sample of projections. 

\begin{figure}[ht]
\centering
\includegraphics[scale=0.5]{./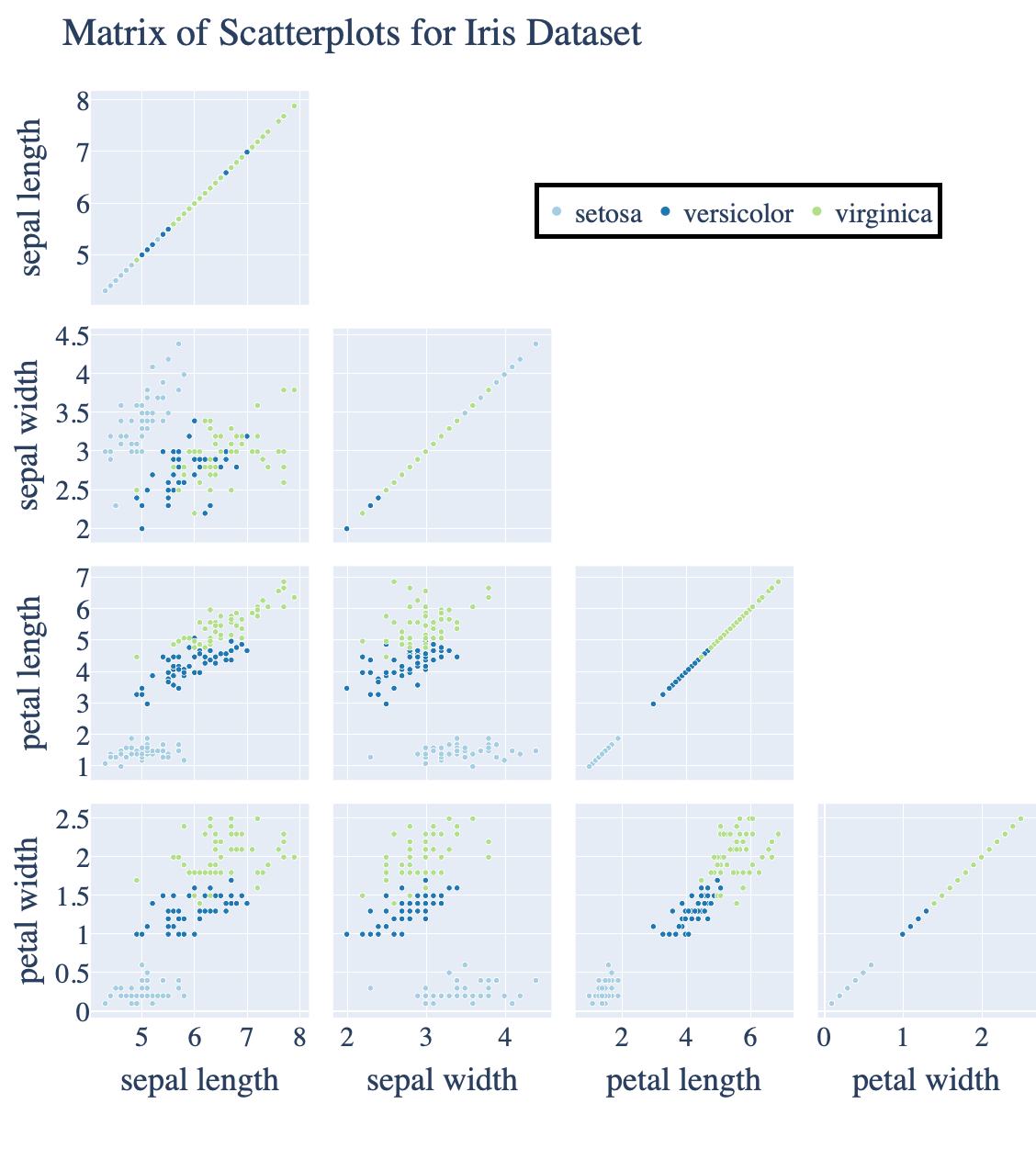}
\caption{Matrix of scatterplots for the Iris dataset.}
\label{fig:irisScatterplots}
\end{figure}

\paragraph{Parallel coordinates.} 
Parallel coordinates plots (Figure \ref{fig:irisParallel}) and radar charts solve the connectivity issue of matrix of scatterplots, but the interpolation approach to identifying individual data points spreads leads to ``visual complexity" in that information about a single data point is spread out substantially over the entire viewing plane. This aesthetic deficiency also lead to practical difficulties in discerning relationships between variables or clusters of data points. Chen et al. \cite{chen2011stringing} consider a procedure for ordering parallel coordinates to minimize these aesthetic issues. Moreover, the graphs consist of many lines in 2D that overlap, making it difficult to pick out any single graph as the number of graphs grows. So called ``small multiples" (see \cite{tufte1990envisioning}) visualizations  offer a partial remedy to this clutter issue, but this sacrifices simultaneous, localized visualization of the entire dataset. 
\begin{figure}[h]
\centering
\includegraphics[scale=0.33]{./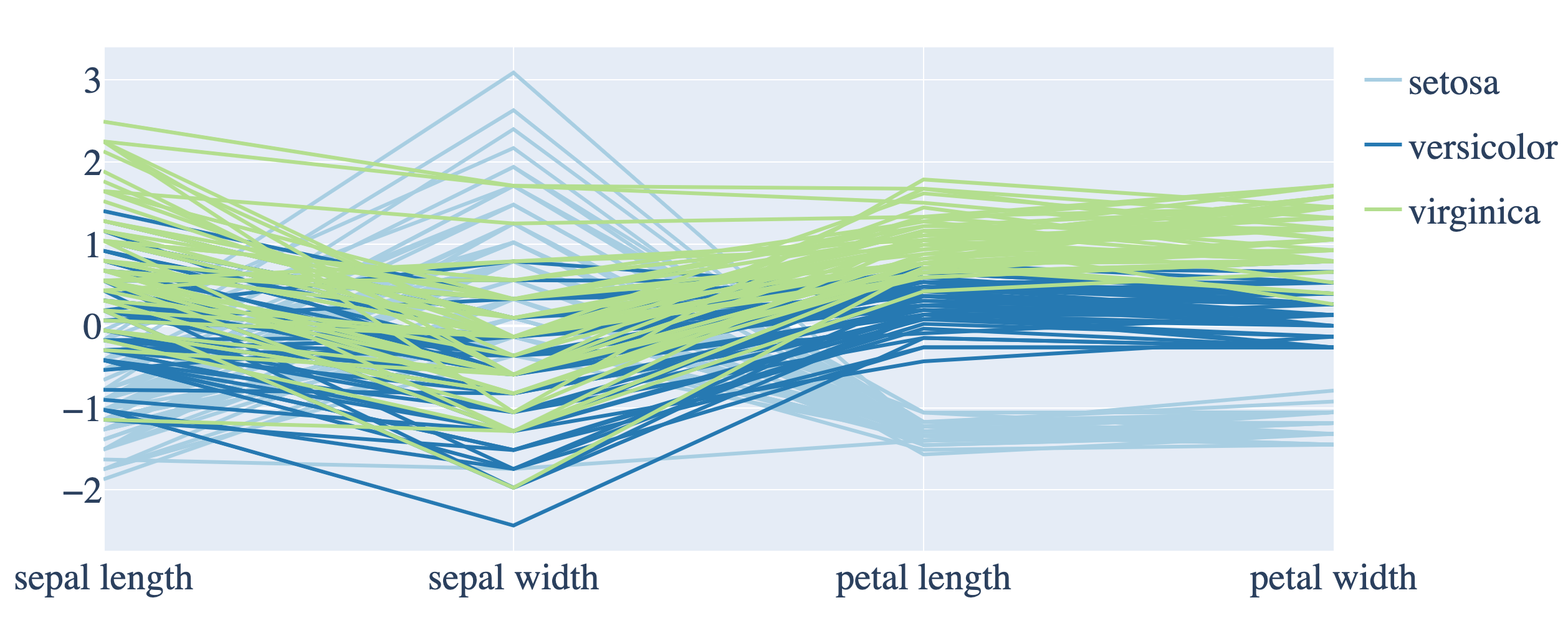}
\caption{Parallel coordinates plot for the Iris dataset.}
\label{fig:irisParallel}
\end{figure}

\paragraph{Projection pursuit and tours.} Grand tours \cite{asimov1985grand} are ``smooth" movies of scatterplots of data under a sequence of projections that vary over a Stiefel manifold. This provides an extreme remedy to the bias produced in the matrix of scatterplots visualizations. This strength leads to two weaknesses: the volume of information overwhelms the viewer and the ``dynamic" nature of the visualization forces the viewer to rely on visual memory and tracking to discern patterns. 

Projection pursuit \cite{friedman1974projection} solves the issue of overwhelming information from the grand tour by selecting ``interesting" projections of the data. \cite{cook1995grand} discusses how to integrate this with the grand tour to obtain visualizations. However, this method is still ``dynamic" as well as computationally intensive.

\paragraph{Andrews plots.} Andrews plots provide a smooth alternative to parallel plots, but may sacrifice the visual identifiability of individual coordinates. Standard Andrews plots are defined by $\Phi:\mathbb{R}^d\to L^2([0,1])$ satisfying
\begin{align}
\Phi[x](t) = x_1 + x_2\cos(2\pi t) + x_3 \sin(2\pi t) + x_4\cos(4\pi t) + x_5\sin(4\pi t) + \cdots
\end{align}
That is, $\Phi[x]$ is a trigonometric polynomial using the lowest $d$ frequencies. \cite{andrews1972plots} justifies using principal component scores\footnote{Slight modifications of the theory presented in our paper indicate that using these PCA scores produces (on average, over the dataset) optimally smooth functions.} instead of the original $x_j$'s by indicating that ``low frequencies are more readily seen". Note that all of the principal components are used, so the plots are a lossless representation of the original dataset. Figure \ref{fig:irisAndrews} illustrates Andrews plots after standard transformations of the Iris dataset. 

\begin{figure}[h]
\centering
\includegraphics[scale=0.35]{./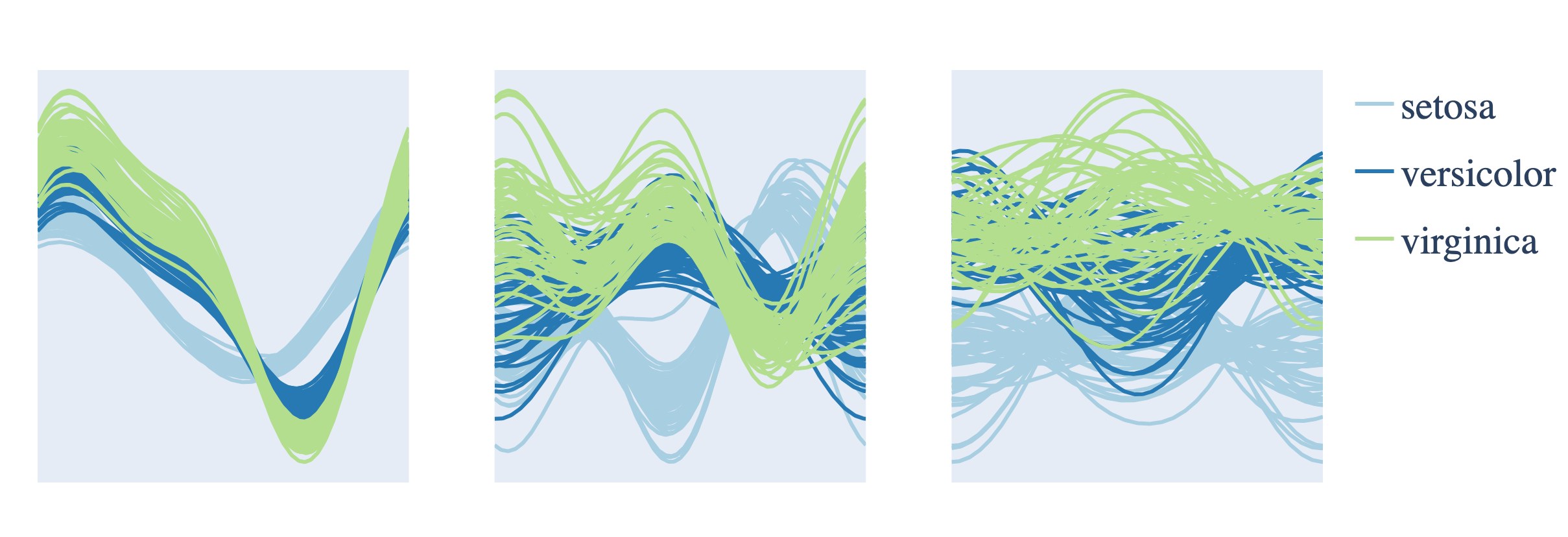}
\caption{Andrews plots for the Iris dataset. The left plot provides the Andrews plot for the raw coefficients from the dataset, the center plot provides the Andrews plot after the coefficients are standardized, and the right plot uses the PCA scores after standardization. While the overall clutter of these images appears to increase from left to right, it is also easier to pick out individual curves in the rightmost plot as opposed o the leftmost plot.}
\label{fig:irisAndrews}
\end{figure}

Since Andrews original paper, many authors introduced variations and interpretations of Andrews plots. A recent perspective on Andrews plot is given in \cite{moustafa2011andrews}. Khattree and Naik \cite{khattree2002andrews} review several different ways to combine $\sin$ and $\cos$ functions, as well as various choices for frequency multipliers. Wegman and Shen \cite{wegman1993three} consider multivariate Andrews plots. Embrechts and Herzberg \cite{embrechts1991variations} study variations of Andrews plots involving basis functions beyond the Fourier basis. Several papers extend Andrews plots to map data points to parametric surfaces. Koziol and Hacke \cite{koziol1991bivariate} study the case where data comes in pairs and a provide embeddings from data space to spaces of smooth surfaces from this framework. Garcia et al. \cite{garcia2004extension, garcia2005visualization} consider visualizations resulting from slices of 3D graphs of bivariate functions related to Andrews plots.

Wegman and Shen \cite{wegman1993three} creates a fast approximate grand tour structure by exploiting orthogonality -- however this projective property does not maintain the full orthogonality property of the basis functions in each variable, so there are redundancies in the components of the curves. Their system is
\begin{align}
\Phi(t) = \sqrt{\frac{2}{d}} \begin{pmatrix}
\sin(\lambda_1 t) & \cos(\lambda_1 t) & \cdots & \sin(\lambda_{d/2} t) & \cos(\lambda_{d/2} t)\\
\cos(\lambda_1 t) & -\sin(\lambda_1t) & \cdots & \cos(\lambda_{d/2} t) & -\sin(\lambda_{d/2} t)
\end{pmatrix}
\end{align}
where $\lambda_1,\ldots, \lambda_{d/2}$ are linearly independent over the rationals. This also sacrifices the isometry property. Instead, a ``long term average" isometry property holds.  Garcia et al. \cite{garcia2005visualization} note that this is related to a basis in derivatives of periodic curves, and observe that the system of projections encourages clusters to exhibit ``flocking" behavior that data clusters may exhibit.

As indicated in Andrews original paper, the plots constitute a \emph{linear isometry} from $\R^d$ to $L^2([0,1])$. That is, $\Phi[ax+by]=a\Phi[x] + b\Phi[y]$ for all $x,y\in\R^d$ and $a,b\in\R$, and
\begin{align}
\Vert \Phi[x-y]\Vert_{L^2} = \Vert x-y\Vert\text{ for all }x,y\in\R^d.
\end{align}
Here, $\Vert \cdot \Vert_{L^2}$ is the standard $L^2$ norm on $L^2([0,1])$, and we see that Andrews plots preserve Euclidean distances without distortion in the $L^2$ metric. This isometry property follows because the Fourier basis constitutes an orthonormal basis of $L^2([0,1])$. This means that, unlike scatterplots in $\R^2$, Andrews plots ensure that proximity in the visualization space implies proximity in the data space.

Unfortunately, $L^2$ distances between functions are not visually intuitive. One may view $\pi \Vert f\Vert_{L^2}^2$ as a volume of revolution, or an experienced electrical engineer may be able to understand $\Vert f\Vert_{L^2}^2$ as the energy of a voltage signal, but in general this quantity is not intuitive for most graphs of functions. On the other hand, the standard bounds
\begin{align}
\Vert \Phi[x-y]\Vert_{L^1} \leq \Vert \Phi[x-y]\Vert_{L^2} \leq \Vert \Phi[x-y]\Vert_{L^\infty}
\end{align}
indicate that the net area between the Andrews plots is a lower bound, and the maximum absolute difference between the curves is an upper bound for the distance between $x$ and $y$ when $\Phi$ is a linear isometry. This upper bound is visually intuitive because we may just form a uniform band around an Andrews plot to obtain a neighborhood. The lower bound is less useful for visual intuition because areas are judged less accurately than lengths (see \cite{stewart2010examination} for a summary of the relevant psychological literature). 

Despite numerous variations on Andrews original idea, they still exhibit an issue with clutter; graphs are 1D curves plotted in 2D, so the presence of many nearby curves obscures information as the number of data points increases. Brushing techniques \cite{becker1987brushing} are sometimes used to overcome this by highlighting collections of curves, but this technique may be applied to any visualization. By starting with a procedure that mitigates visual clutter, brushing techniques may provide even more visual clarity.

\subsection{Ideal properties of 3D Andrews plots}

In order to provide better spatial separation of curves for visualization, we consider graphs of planar parametric curves plotted in 3D with a third axis for the ``time" parameter. Standard ``drag-to-rotate" and ``zoom" interfaces allow us to fully navigate these types of visualizations. Ideally, these curves also retain all the desirable properties of Andrews plots:

\begin{enumerate}
\item The map from the Euclidean data space to plane curves (identified with $\left(L^2([0,1])\right)^2$) is a linear isometry.
\item The linear isometry takes data to smooth curves.
\item The linear isometry property has accessible visual interpretations.
\item Displaying the curves is computationally inexpensive.
\end{enumerate}
Subtle issues arise when we attempt to maintain these properties for planar parametric curves. We now identify five important properties that ensure visual appeal. By combining these properties, we obtain a constrained optimization problem over an infinite dimensional space of linear transformations.

\paragraph{Global non-degeneracy of curves and isotropic isometries.} First, a map of the form
\begin{align}
\Phi[x](t) = \begin{pmatrix}
x_1 + x_2\sqrt{2}\cos(2\pi t) + x_3\sqrt{2}\sin(2\pi t) + \cdots\\
0
\end{pmatrix}
\end{align}
is a linear isometry, but the plots of the resulting curves are confined to a two dimensional subspace of $\R^3$. Such degeneracies contradict the intent of mapping into 3D in the first place. To restrict this type of degenerate behavior, we introduce the \emph{isotropic isometry} condition for a linear map from the data space to the space of planar curves: $\Phi:\R^d\to \left(L^2([0,1])\right)^2$ is such that, for any $u\in\R^2$ with $\Vert u\Vert=1$,
\begin{align}
\Vert u^T\Phi[x]\Vert_{L^2} = \Vert x\Vert\text{ for all } x\in\R^d.
\end{align}
That is, for any unit vector $u\in\R^2$, the projection $x\mapsto u^T\Phi[x]$ is an isometry from $\R^d$ to $L^2([0,1])$.

\paragraph{Local non-degeneracy of curves and the projective tour property.} While the isotropic isometry condition prevents ``global" degeneracies of curves, it is desirable to also prevent a ``local" degeneracy of the curves as well. Suppose the linear map $\Phi(t):\R^d\to\R^2$ has rank $1$ for all $t\in[0,1]$. Then the ``skinny" singular value decomposition of $\Phi(t)$ is $\Phi(t) = u(t)\sigma(t) v(t)^T$ where $u(t)\in \R^2$, $\sigma(t)\in\R$, and $v(t)\in\R^d$ for all $t\in[0,1]$. Moreover, $u$ satisfies (a) $\Vert u(t)\Vert=1$ for all $t\in[0,1]$, and (b) for each $x\in\R^d$ there is a function $\gamma_x$ ($\gamma_x(t)=\sigma(t)v(t)^Tx$ for all $t\in[0,1]$) with $\Phi[x](t) = \gamma_x(t)u(t)$ for all $t\in[0,1]$. 

On the other hand, if $\Phi(t)$ has rank $2$ for all $t\in[0,1]$, then the singular value decomposition has the form
\begin{align}
\Phi(t) = \begin{pmatrix} u_1(t) & u_2(t)\end{pmatrix} \begin{pmatrix} \sigma_1(t) & 0 \\ 0 & \sigma_2(t)\end{pmatrix} \begin{pmatrix}v_1(t)^T\\ v_2(t)^T\end{pmatrix}
\end{align}
where $u_1(t), u_2(t)\in \R^2$ are orthonormal, $\sigma_1(t),\sigma_2(t)\in\R$ are strictly positive, and $v_1(t),v_2(t)\in\R^d$ for all $t\in[0,1]$. Therefore, there are functions $\gamma_{x,1},\gamma_{x,2}$ ($\gamma_{x,i}(t) = \sigma_i(t)v_i(t)^Tx$ for all $t\in[0,1]$ and $i=1,2$) such that
\begin{align}
\Phi[x](t) = \gamma_{x,1}(t)u_1(t)+\gamma_{x,2}(t)u_2(t)\text{ for all }t\in[0.1].
\end{align}
In the rank $1$ case, we see that $\Vert \Phi[x]-\Phi[y]\Vert_{L^2}$ is mediated by only a single function $\gamma_x-\gamma_y$, whereas this difference is mediated by two functions $\{\gamma_{x,k}-\gamma_{y,k}\}_{k=1}^2$ in the rank $2$ case. This means that the rank $2$ case allows curves to diverge along as many directions as possible as the curves evolve in the plane. Therefore, to encourage this diverse directional divergence behavior, our final criteria is that $\Phi$ approximates a projective tour: there exists a $c>0$ such that $c\Phi(t)$ is a rank $2$ projection (i.e. it is a co-isometry or has two non-zero singular values equal to $1$) for all $t\in[0,1]$. 

\paragraph{Smooth curves and mean quadratic variation.} The set of isotropic isometries from data space to spaces of plane curves with the form $\Phi:\R^d\to \left(L^2([0,1])\right)^2$, is vast and a generic choice of $\Phi$ from this set produces noisy, high-frequency curves. To ensure the production of smooth curves, we choose $\Phi$ which minimizes the \emph{mean quadratic variation}
\begin{align}
\text{MQV}(\Phi, X)=\frac{1}{N} \sum_{n=1}^N \left\Vert \frac{d\Phi[x_n]}{dt}\right\Vert_{L^2}^2
\end{align} 
where $X=\begin{pmatrix} x_1 & x_2 & \cdots & x_N\end{pmatrix}\in\R^{d\times N}$ is a data matrix for a dataset $\{x_j\}_{j=1}^N\subset\R^d$, and $\Vert \cdot \Vert_{L^2}$ is an $L^2$ norm for vector-valued functions over $[0,1]$:
\begin{align}
\left\Vert \frac{d\Phi[x_n]}{dt}\right\Vert_{L^2}^2 = \int_0^1\left\Vert \frac{d\Phi[x_n]}{dt}(t)\right\Vert^2\:dt.
\end{align}
By minimizing the mean quadratic variation over a dataset, we promote curves exhibiting smoothness (i.e. functions with bounded support in the Fourier domain). 

\paragraph{Interpretability of isometries for spaces of derivatives.} Finally, we consider the interpretation of the $L^1$ and $L^\infty$ bounds in the context of these planar parametric curves. While the $L^\infty$ bound is relatively easy to interpret, the $L^1$ bound is not visually intuitive. However, if we assume that the isometry $\Phi$ takes data to \emph{derivatives} of smooth curves, then the $L^1$ bound describes the lengths of curves and the $L^\infty$ bound applies to the instantaneous velocities of curves. 

\paragraph{Removal of visual bias and closed curves.} Periodic curves with period $1$ may be written as
\begin{align}
\gamma(t) = c + \tilde\gamma(t)
\end{align}
where $c\in\R^2$, $\tilde\gamma(t+1) = \tilde\gamma(t)$, and $\int_0^1 \tilde\gamma(t)\:dt=0$ if the curve is integrable on $[0,1]$. Consequently, integrals of periodic planar parametric curves may be separated into affine and periodic components:
\begin{align}
\int_0^s \gamma(t)\:dt = sc + \int_0^s\tilde\gamma(t)\:dt
\end{align}
A linear isometry from $\R^d$ to $\left(L^2([0,1])\right)^2$ induces a linear isometry to the coefficients of the affine component of these maps. Inclusion of the affine components creates a visual impact dominated by information equivalent to two scatterplots, which is necessarily lossless. To avoid this visual bias, we restrict the images of isometries to periodic functions on $[0,1]$. 

\subsection{Summary of main results}

While minimizing the mean quadratic variation subject to the isotropic isometry property and the projective tour property presents a substantial computational challenge, it is possible to characterize and parameterize the minimizers which satisfy just the isotropic isometry property, and then we may extract a minimizer from this parameterization that approximately satisfies the tour property. 

Moreover, this minimizer and the resulting curves only requires the computation of a singular value decomposition of the data matrix. As such, our methods are (relatively) computationally inexpensive. We now summarize our main results:
\begin{enumerate}
\item We parameterize the set of minimizers of the mean quadratic variation subject to the isotropic isometry condition (Theorem \ref{thm:1}).
\item In general, the set of minimizers has $d$ degrees of freedom, and we exhibit a particular choice of minimizer that admits a projective tour property in an asymptotic sense (Theorem \ref{thm:2}). This choice is motivated by recent results on quadratic Gauss sums.
\end{enumerate}
It should be noted that one may reformulate the proof of Theorem \ref{thm:1} to demonstrate that PCA scores provide Andrews plots that minimize the mean quadratic variation over a dataset subject to an isometry condition. The rightmost plot in Figure \ref{fig:irisAndrews2D} illustrates the resulting plots for the Iris dataset.

When plotting the resulting graphs of curves, we observe that curves tend to follow each other due to the fact that the tangent vectors of the graphs always agree on the first component. To further declutter the 3D plots, we consider mapping from planar curves ($\left(L^2([0,1])\right)^2$) to the space of arc-length parameterized curves in $\R^3$ with total length $1$ by imposing the condition that the derivative of the tangent of this curve satisfies
\begin{align}
d{\bf T}(t) = \phi_1(t) {\bf N}_1(t) + \phi_2(t){\bf N}_2(t)
\end{align}
where ${\bf T}, {\bf N}_1, {\bf N}_2$ constitute a moving frame determined by the functions $\phi_1,\phi_2\in L^2([0,1])$. This type of moving frame is related to the Frenet-Serret moving frame, and has been dubbed the ``Bishop frame" of a curve.  In particular, to produce a filament $\gamma_i$ from a data point $x_i$, we let $\phi_j[x]$ denote the $j$th component function of the map $\Phi[x_i]$ and numerically solve a linear first order matrix ODE. Therefore, our $\R^2$ curves give rise to $\R^3$ curves via numerical solutions to a system of equations that produces Bishop frames \cite{bishop1975there}. We also show that the isometry condition may be interpreted in terms of ``relative" curvatures. This interpretation follows from the fact that a curve satisfying the above formula has the curvature function $\kappa(t) = \sqrt{\phi_1^2(t)+\phi_2^2(t)}$). While the resulting curves now involve a non-linear transformation of the data that exhibits anisotropic evolution of data curves, the resulting plots transform the ``braided" 3D Andrews plots into ``bushy" plots that exhibit (empirically) better visual separation. These filament plots are illustrated in Figure \ref{fig:irisFilament}.

\section{Main results}\label{sec:main}

Our results hold over spaces of square integrable functions. Let $L^2([0,1])$ denote the Hilbert space of square integrable (in the Lebesgue sense) functions with norm $\Vert \cdot\Vert_{L^2}$ and inner product $\langle\cdot,\cdot\rangle_{L^2}$, and set
\begin{align}
\mc{H}^1([0,1]) = \left\{f\in L^2([0,1]): \frac{df}{dt}\text{ is defined almost everywhere, } \frac{df}{dt}\in L^2([0,1])\right\}
\end{align} 
In other words, $\mc{H}^1([0,1])$ is the Sobolev space of square integrable functions over $[0,1]$ with square integrable derivatives over $[0,1]$. Since the decorations of $\mc{H}^1([0,1])$ never change, we simply write $\mc{H}$ for the remainder of the paper.

We identify the space of parameterized linear functionals as the $d$-fold Cartesian product of $d$ copies of $\mc{H}$ as $\mc{H}^d\subset \left(L^2([0,1])\right)^d$. In particular, $\phi\in \mc{H}^d$ if and only if $\phi:[0,1]\to \R^d$ has component functions $\phi_k\in\mc{H}$ for $k=1,\ldots, d$. Note that $\mc{H}^d$ inherits the (incomplete) norm
\begin{align}
\Vert \phi\Vert_{L^2} = \sqrt{\int_0^1 \Vert\phi(t)\Vert^2\:dt}
\end{align}
from $\left(L^2([0,1])\right)^d$, where $\Vert \cdot\Vert$ is the standard Euclidean norm on $\R^d$.

We define the space of parameterized linear mappings from the data space $\mathbb{R}^d$ to $\R^2$ by $\mc{H}^{2\times d}\subset \left(L^2([0,1])\right)^{2\times d}$ so that $\Phi\in\mc{H}^{2\times d}$ if and only if $\Phi:[0,1]\to \R^{2\times d}$ has component functions $\phi_{j,k}\in\mc{H}$ for all $j=1,2$ and $k=1,\ldots, d$. Here $\R^{2\times d}$ denotes the space of all $2$ by $d$ matrices with real entries. Note that $\mc{H}^{2\times d}$ inherits the norm
\begin{align}
\Vert \Phi\Vert_{L^2} = \sqrt{\int_0^1 \Vert\Phi(t)\Vert^2\:dt}
\end{align}
from $\left(L^2([0,1])\right)^{2\times d}$, where
\begin{align}
\Vert A \Vert = \sqrt{\text{trace}(A^TA)}
\end{align}
is the Frobenius norm of a matrix $A\in\R^{2\times d}$.

For $\Phi\in\mc{H}^{2\times d}$, we call the $\phi_{j,\cdot}^T\in \mc{H}^d$ functions the \emph{component slices} of $\Phi$, we call the $2$ by $d$ matrices 
\begin{align}
\Phi(t) = \begin{pmatrix}
\phi_{1,1}(t) & \phi_{1,2}(t) & \cdots & \phi_{1,d}(t)\\
\phi_{2,1}(t) & \phi_{2,2}(t) & \cdots & \phi_{2,d}(t)\\
\end{pmatrix}
\end{align} 
the \emph{time slices} of $\Phi$, and we call the functions $\phi_{\cdot, k}\in\mc{H}^2$ the \emph{coordinate slices} of $\Phi$. Figure \ref{fig:tensorSlices} illustrates the time slices of such a tensor.

\begin{figure}[h]
\centering
\includegraphics[scale=0.5]{./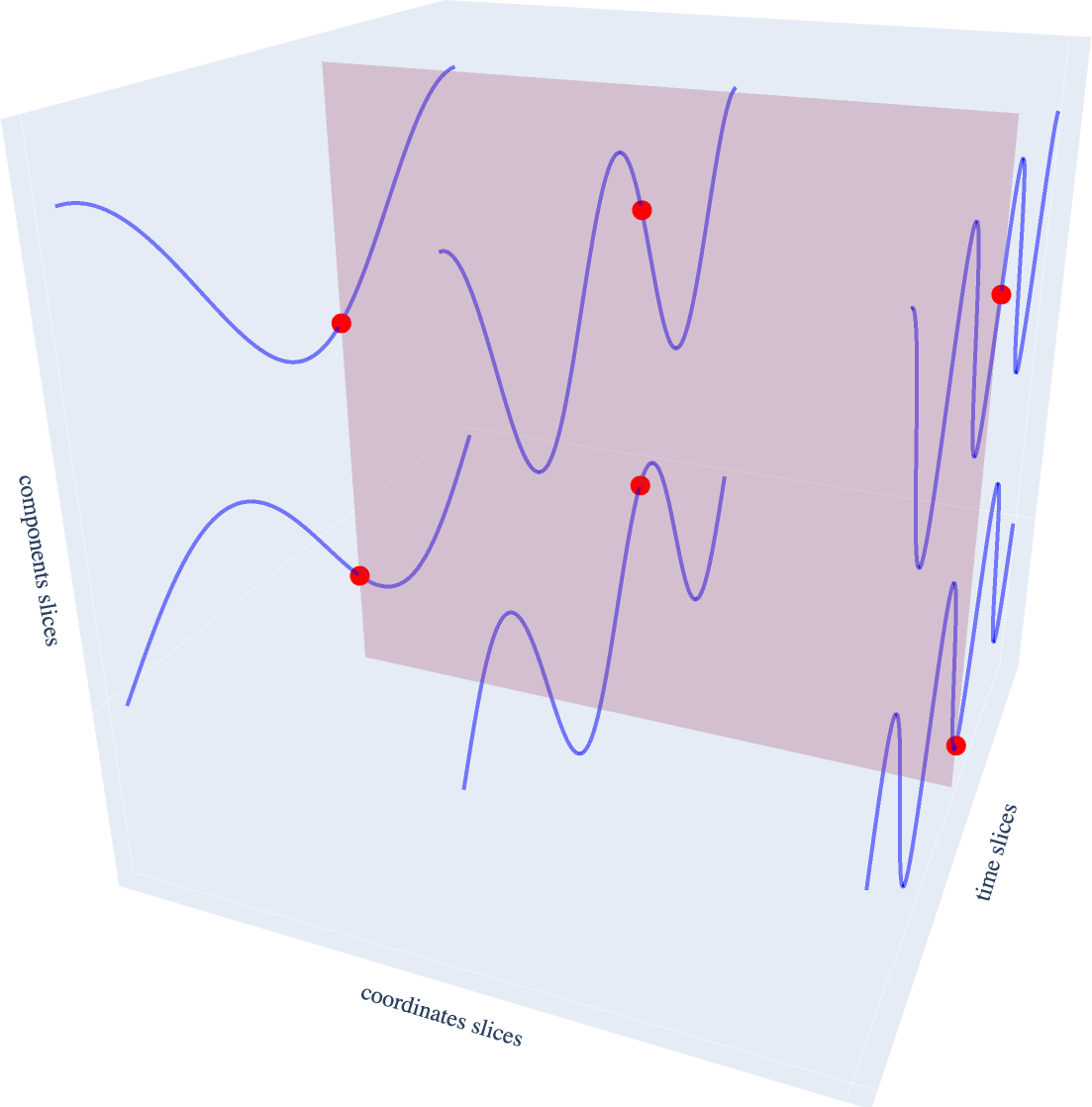}
\caption{An example of ``false" time slices for $\Phi\in\mc{H}^{2\times 3}$. Here there are two possible component slices, three possible coordinate slices, and a continuum of time slices. The ``entries" of a time slice $\Phi(t)$ are illustrated by the red markers.  }
\label{fig:tensorSlices}
\end{figure}

Consider the set of all linear operators $\mc{L}(\R^d,\mc{H}^2)$. These are all bounded (and hence continuous) since the unit ball of $\R^d$ is compact. Just as linear functions between Euclidean spaces are related to matrices, members of $\mc{L}(\R^d,\mc{H}^2)$ may be characterized by matrix-valued functions whose entries are characterized by the next proposition.  

\begin{proposition}\label{prop:1} The operator $\Omega\in \mc{L}(\R^d, \mc{H}^2)$ is linear if and only if there is a $\Phi\in \mc{H}^{2\times d}$ such that $
\Omega[x](t) = \Phi(t) x$ for all $x\in\R^d$ and almost all $t\in[0,1]$.
\end{proposition}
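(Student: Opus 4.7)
The plan is to prove the two directions separately, with the backward (sufficiency) direction being essentially a one-line computation and the forward (necessity) direction using the standard basis of $\R^d$ to construct $\Phi$ component by component.

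For the backward direction, suppose we are given $\Phi\in\mc{H}^{2\times d}$ and define $\Omega[x](t)=\Phi(t)x$. For each $x\in\R^d$, the function $t\mapsto\Phi(t)x$ has components $(\Omega[x])_j(t)=\sum_{k=1}^d\phi_{j,k}(t)x_k$, which is a finite linear combination of functions in $\mc{H}$ and so lies in $\mc{H}$; hence $\Omega[x]\in\mc{H}^2$. Linearity in $x$ then follows immediately from the bilinearity of matrix--vector multiplication, verified pointwise a.e.\ in $t$.

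For the forward direction, let $\{e_k\}_{k=1}^d$ denote the standard basis of $\R^d$ and set $\Omega[e_k]=(\phi_{1,k},\phi_{2,k})^T\in\mc{H}^2$. Fix representatives of each $\phi_{j,k}\in\mc{H}$ and assemble them into a matrix-valued function $\Phi:[0,1]\to\R^{2\times d}$. Since each of the $2d$ entries lies in $\mc{H}$, we have $\Phi\in\mc{H}^{2\times d}$. For arbitrary $x=\sum_{k=1}^d x_k e_k\in\R^d$, linearity of $\Omega$ gives
\begin{align}
\Omega[x]=\sum_{k=1}^d x_k\,\Omega[e_k]
\end{align}
as elements of $\mc{H}^2$. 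Evaluating representatives pointwise yields $\Omega[x](t)=\sum_{k=1}^d x_k\phi_{\cdot,k}(t)=\Phi(t)x$ for almost every $t\in[0,1]$, where the exceptional null set arises as the union of the finitely many null sets on which the representatives of $\Omega[e_k]$ disagree with their equivalence class (this union may be taken independent of $x$ since it only involves the $d$ basis elements).

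The only subtle point — and what I would call out as the mild obstacle worth handling carefully — is the a.e.\ bookkeeping: elements of $\mc{H}^2$ are equivalence classes, so one must fix representatives of the $\Omega[e_k]$ at the start so that $\Phi(t)$ is a well-defined matrix for each $t$ (up to a single null set depending only on $d$, not on $x$). Everything else is formal manipulation, and no appeal to deeper functional-analytic machinery (e.g.\ Pettis measurability or a Bochner integral representation) is required because $\R^d$ is finite-dimensional and the codomain space $\mc{H}^2$ is fixed.
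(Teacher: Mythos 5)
Your proof is correct. The paper omits the proof of this proposition entirely (calling it ``a simple exercise''), and your argument---direct verification for sufficiency and evaluation on the standard basis $\{e_k\}$ for necessity, with the careful fixing of representatives so the null set depends only on the $d$ basis images---is precisely the standard argument the author intends.
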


The proof of this proposition is a simple exercise, so we omit it. Based on Proposition \ref{prop:1}, we abuse notation by allowing $\Phi[x]$ to denote the operator induced by $\Phi\in\mc{H}^{2\times d}$ applied to the vector $x\in\R^d$. We define the \emph{quadratic variation} of $f\in\mc{H}^2$ by
 \begin{align}
 \text{QV}(f) = \int_0^1 \left\Vert \frac{d f}{dt}(t)\right\Vert^2\: dt.
 \end{align}
For a dataset $X=\{x_n\}_{n=1}^N\subset\mathbb{R}^d$ and a mapping $\Phi\in\mc{H}^{2\times d}$ the mean quadratic variation is
 \begin{align}
 \text{MQV}(\Phi; X) = \frac{1}{N}\sum_{n=1}^N \text{QV}(\Phi[x_n]).
 \end{align}
This measures the average ``smoothness" of the $\Phi$ mapping over the dataset. The next proposition simplifies the structure of the isotropic isometry condition for such a $\Phi$.

\begin{proposition} The operator $\Phi\in \mc{L}(\R^d,\mc{H}^2)$ is an isotropic isometry if and only if
\begin{align}
\int_0^1\phi_{j,k}(t) \phi_{j^\prime, k^\prime}(t)\:dt = \delta_{(j,k),(j^\prime,k^\prime)}
\end{align}
for all $(j,k),(j^\prime,k^\prime)\in[2]\times[d]$, where
$[2] =\{1,2\}$, $[d]=\{1,\ldots,d\}$, and $\delta_{(k,j),(k^\prime,j^\prime)}$ is a Kronecker delta function. 
\end{proposition}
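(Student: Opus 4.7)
The plan is to recast the isotropic isometry condition as an identity for the Gram matrix
\begin{align*}
G_{(j,k),(j',k')} := \int_0^1 \phi_{j,k}(t)\phi_{j',k'}(t)\,dt,
\end{align*}
and then deduce that $G$ is the $2d\times 2d$ identity via polarization plus uniqueness of symmetric matrices representing a given quadratic form. The backward direction is immediate from this viewpoint: using $(u^T\Phi[x])(t)=\sum_{j,k}u_jx_k\phi_{j,k}(t)$ and orthonormality,
\begin{align*}
\|u^T\Phi[x]\|_{L^2}^2 = \sum_{j,j',k,k'}u_ju_{j'}x_kx_{k'}\,G_{(j,k),(j',k')} = \sum_{j,k}u_j^2 x_k^2 = \|u\|^2\|x\|^2,
\end{align*}
which equals $\|x\|^2$ whenever $\|u\|=1$.

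For the forward direction, I would first extend the isotropic isometry identity from unit $u$ to all $u\in\R^2$ by homogeneity, obtaining $\|u^T\Phi[x]\|_{L^2}^2 = \|u\|^2\|x\|^2$. Polarizing in $x$ yields $\langle u^T\Phi[x], u^T\Phi[y]\rangle_{L^2} = \|u\|^2\langle x,y\rangle$ for all $x,y\in\R^d$ and $u\in\R^2$. Specializing to standard basis vectors $x=e_k$, $y=e_{k'}$ produces the quadratic-in-$u$ identity
\begin{align*}
u^T M^{(k,k')} u = \|u\|^2\delta_{k,k'}, \qquad M^{(k,k')}_{j,j'} := G_{(j,k),(j',k')},
\end{align*}
and by uniqueness of the symmetric matrix representing a quadratic form on $\R^2$, the symmetric part of $M^{(k,k')}$ equals $\delta_{k,k'} I_2$. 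When $k=k'$, the matrix $M^{(k,k)}$ is automatically symmetric by symmetry of the $L^2$ inner product, so one instantly reads off $\langle\phi_{j,k},\phi_{j',k}\rangle = \delta_{j,j'}$ for every $j,j',k$.

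The main obstacle is the case $k\neq k'$, where the symmetric-part equation only imposes $G_{(j,k),(j',k')}+G_{(j',k),(j,k')}=0$ rather than forcing each entry to vanish. To close this gap I would push the polarization further, this time in $u$ as well (via the substitutions $u\mapsto u\pm v$), to obtain the fully bilinear identity $\langle u^T\Phi[x], v^T\Phi[y]\rangle_{L^2}=\langle u,v\rangle\langle x,y\rangle$ for all $u,v\in\R^2$ and $x,y\in\R^d$. Specializing to $u=e_j$, $v=e_{j'}$, $x=e_k$, $y=e_{k'}$ then reads off $G_{(j,k),(j',k')}=\delta_{(j,k),(j',k')}$ entrywise. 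Handling the off-diagonal case cleanly, rather than relying on the symmetric-part shortcut, is where I expect the bulk of the technical care to lie.
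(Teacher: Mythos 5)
Your backward direction is correct, and your polarization in $x$ correctly reduces the forward direction to the identity $\sum_{j,j'} u_j u_{j'} \langle \phi_{j,k},\phi_{j',k'}\rangle_{L^2} = \Vert u\Vert^2\delta_{k,k'}$; the paper omits its own proof, saying only that the hard implication is a ``straightforward application of the parallelogram identity,'' so up to this point you are on the intended path. The gap you flag at $k\neq k'$ is real, but your proposed fix does not close it: the map $(u,v)\mapsto \langle u^T\Phi[x], v^T\Phi[y]\rangle_{L^2}$ is bilinear but \emph{not symmetric} when $x\neq y$, so substituting $u\pm v$ into its diagonal only recovers its symmetric part, namely $\langle u^T\Phi[x], v^T\Phi[y]\rangle_{L^2}+\langle v^T\Phi[x], u^T\Phi[y]\rangle_{L^2}=2\langle u,v\rangle\langle x,y\rangle$. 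Specialized to standard basis vectors this is exactly the relation $G_{(j,k),(j',k')}+G_{(j',k),(j,k')}=0$ that you already had, so no new information is produced and the ``fully bilinear identity'' does not follow.

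In fact the forward implication is false for $d\geq 2$, so no argument can close the gap. Take orthonormal $e_1,\ldots,e_4\in L^2([0,1])$, fix $0<c<1$, and set $\phi_{1,1}=e_1$, $\phi_{1,2}=e_2$, $\phi_{2,1}=-c\,e_2+\sqrt{1-c^2}\,e_3$, $\phi_{2,2}=c\,e_1+\sqrt{1-c^2}\,e_4$. All four functions are unit vectors, and the only nonzero off-diagonal Gram entries are $\langle\phi_{1,1},\phi_{2,2}\rangle_{L^2}=c$ and $\langle\phi_{1,2},\phi_{2,1}\rangle_{L^2}=-c$. Then
\begin{align*}
\Vert u^T\Phi[x]\Vert_{L^2}^2=\Vert u\Vert^2\Vert x\Vert^2+2c\,u_1u_2x_1x_2-2c\,u_1u_2x_2x_1=\Vert u\Vert^2\Vert x\Vert^2,
\end{align*}
so $\Phi$ is an isotropic isometry whose entry functions are not orthonormal. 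Structurally, the quadratic form $w\mapsto w^T(G-I)w$ is only required to vanish on the rank-one matrices $w_{j,k}=u_jx_k$, and for $d\geq 2$ the $2\times 2$ minors $w_{1,k}w_{2,k'}-w_{1,k'}w_{2,k}$ span a nontrivial space of quadratic forms vanishing there. Your instinct that the off-diagonal case is where ``the bulk of the technical care'' lies was right; the honest conclusion is that the stated equivalence fails in that case, and orthonormality is strictly stronger than isotropic isometry (the paper is unaffected downstream only because it imposes orthonormality directly as the constraint of Program \ref{mmqv}).
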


In other words, the entry functions of the matrix-valued function $\Phi$ form an orthonormal collection in $L^2([0,1])$. The proof of the difficult implication in this proposition is a straightforward application of the parallelogram identity, and we omit the full proof of this proposition due to its simplicity.

We additionally impose the constraint that
\begin{align}
\int_0^1 \Phi[x](t)\:dt = 0\text{ for all } x\in\R^d.
\end{align}
This constraint implies $\Phi$ maps into spaces of discrete derivatives of parameterizations of closed curves. The conditions
\begin{align}
\int_0^1 \phi_{j,k}(t)\:dt = 0\text{ for all }j\in[2],k\in[d]
\end{align}
encode this constraint. Combining this constraint with the isometric isometry constraint, we have that $\Phi$ is a norm-preserving map onto spaces of derivatives of parameterizations of closed curves. Therefore, $\Phi$ induces a map onto spaces of functions such that the norm-difference in the data space is analogous to the energy-difference (in the sense of the energy of a voltage signal) of the path. 

We now define the set $\mc{S}^{2,d}(X)$ that shall ultimately constitute the minimizers of our constrained optimization program. For a matrix $A\in\R^{2\times k}$, we define the \emph{vectorization operator} $\alpha:\R^{2\times k}\to \R^{2k}$ by
\begin{align}
\alpha \begin{pmatrix}
a_{1,1} & a_{1,2} & \cdots &  a_{1,k}\\
a_{2,1} & a_{2,2} & \cdots &  a_{2,k}\\
\end{pmatrix} = \begin{pmatrix}
a_{1,1} \\ a_{1,2} \\ \vdots \\  a_{1,k}\\
a_{2,1} \\ a_{2,2} \\ \vdots \\  a_{2,k}\\
\end{pmatrix}\text{, and note that } \alpha^{-1}\begin{pmatrix} x \\ y\end{pmatrix} = \begin{pmatrix} x^T\\ y^T\end{pmatrix}.
\end{align}
for all $x,y\in\R^{k}$. Using this vectorization operator, we define a group action of the orthogonal group $\mc{O}(2k)$ on $\mc{H}^{2\times k}$ so that when $U\in\mc{O}(2k)$ and $\Psi\in\mc{H}^{2\times k}$, then $U\cdot \Psi \in\mc{H}^{2\times k}$ is defined by
\begin{align}
(U\cdot \Psi)(t) = \alpha^{-1}\left(U\alpha(\Psi(t))\right) \text{ for almost all } t\in[0,1].
\end{align}

We define the functions $c_k, s_k\in \mc{H}$ by $c_k(t) = \sqrt{2}\cos(2\pi k t)$ and $s_k(t) = \sqrt{2} \sin(2\pi k t)$ (note that $\sqrt{2}$ is the normalization constant). For $k_1,k_2\in[d]$ satisfying $k_1\leq k_2$, we let $\Pi_{[k_1,k_2]}$ denote the orbit of
\begin{align}
\Psi_{k_1,k_2}(t) = \begin{pmatrix}
c_{k_1}(t) & \cdots & c_{k_2}(t)\\
s_{k_1}(t) & \cdots & s_{k_2}(t)\\
\end{pmatrix}
\end{align}
under this action by $\mc{O}(2(k_2-k_1+1))$. We note that $\Psi\in\Pi_{[k_1,k_2]}$ if and only if the component functions of $\Psi$ form an orthonormal basis of
\begin{align}
\text{span}\{c_{k_1},\ldots, c_{k_2},s_{k_1},\ldots, s_{k_2}\}\subset L^2([0,1]).
\end{align}
When $k\in[d]$, we simply write $\Pi_k = \Pi_{[k,k]}$ and note that 
and note that $\Pi_k$ is the orbit of the curve $
\begin{pmatrix}
c_k &
s_k
\end{pmatrix}^T \in \mc{H}^2$
under the left-multiplication action of $\mc{O}(2)$.

 Let $X\in\R^{d\times N}$, suppose $X=U\Sigma V^T$ is a singular value decomposition of $X$ (where the diagonal entries of $\Sigma$, or singular values, are in the standard, non-increasing order), and that the partition $\{[k_{q,\min},k_{q,\max}]\}_{q=1}^p$ of $[d]$ satisfies 
\begin{align}
\sigma_k = \sigma_{k_{q,\min}} \text{ if and only if } k\in[k_{q,\min},k_{q,\max}].
\end{align}

\begin{definition}
Suppose $X\in\R^{d\times N}$ has the SVD $X=U\Sigma V^T$. Then $\Phi \in \mc{S}^{2,d}(X)$ if and only if $\widetilde{\Phi}(t) = \Phi(t) U$ satisfies 
\begin{align}
\widetilde{\Phi}_{[k_{q,\min},k_{q,\max}]} = \begin{pmatrix}
\widetilde{\phi}_{1, k_{q,\min}} & \cdots & \widetilde{\phi}_{1, k_{q,\max}}\\
\widetilde{\phi}_{2, k_{q,\min}} & \cdots & \widetilde{\phi}_{2, k_{q,\max}}\\
\end{pmatrix} \in \Pi_{[k_{q,\min},k_{q,\max}]}
\end{align}
for all $q\in[p]$.
\end{definition} 

With this definition in hand, we now state the form of the optimization program we have derived. Recalling that
\begin{align}
\text{MQV}(\Phi, X)=\frac{1}{N} \sum_{n=1}^N \left\Vert \frac{d\Phi[x_n]}{dt}\right\Vert_{L^2}^2,
\end{align} 
our optimization program is
\begin{align}
\min_{\Phi\in\mc{H}^{2\times d}} \text{MQV}(\Phi;X) \label{mmqv}
\end{align}
subject to the constraints
\begin{align}
\int_0^1\phi_{k,j}(t) \phi_{k^\prime, j^\prime}(t)\:dt = \delta_{(k,j),(k^\prime,j^\prime)}.
\end{align}
for all $(k,j),(k^\prime,j^\prime)\in[2]\times[d]$ and
\begin{align}
\int_0^1 \phi_{k,j}(t)\:dt = 0.
\end{align}
for all $(k,j)\in[2]\times[d]$. We are now in a position to state our main result.

\begin{theorem}\label{thm:1}
The system $\Phi\in\mc{L}(\R^d,\mc{H}^2)$ solves Program \ref{mmqv} if and only if $\Phi\in\mc{S}^{2,d}(X)$.
\end{theorem}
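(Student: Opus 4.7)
The plan is to diagonalize the data covariance, expand in the Fourier basis to convert the problem into an eigenvalue optimization, and then apply a trace inequality whose equality condition yields the block structure of $\mathcal{S}^{2,d}(X)$. First I would write
\begin{align*}
\text{MQV}(\Phi; X) = \frac{1}{N}\int_0^1 \text{tr}\bigl(\Phi'(t)\, XX^T\, \Phi'(t)^T\bigr)\,dt
\end{align*}
and use the SVD $X = U\Sigma V^T$ (so that $XX^T = U\Sigma^2 U^T$) to change variables via $\tilde\Phi(t) = \Phi(t) U$, obtaining
\begin{align*}
\text{MQV}(\Phi; X) = \frac{1}{N}\sum_{k=1}^d \sigma_k^2 \bigl(\|\tilde\phi_{1,k}'\|_{L^2}^2 + \|\tilde\phi_{2,k}'\|_{L^2}^2\bigr).
\end{align*}
Since $U$ is orthogonal and acts on the column index, both the orthonormality of the $\phi_{j,k}$ and the zero-mean constraint pass unchanged to the $\tilde\phi_{j,k}$.

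Next I would expand each $\tilde\phi_{j,k}$ in the Fourier basis $\{c_m, s_m\}_{m \geq 1}$, encoding the zero-mean constraint by dropping the constant term. Let $C$ be the resulting $2d \times \infty$ coefficient matrix, indexed by $(j,k)$ on rows and by $(m,\cos/\sin)$ on columns, so that orthonormality becomes $CC^T = I_{2d}$. Since $c_m' = -2\pi m\, s_m$ and $s_m' = 2\pi m\, c_m$, differentiation squared is represented by the diagonal operator $D$ with entries $(2\pi m)^2$ of multiplicity two. Setting $W = \text{diag}(\sigma_k^2)$ across the $2d$ index pairs $(j,k)$, the objective becomes $\tfrac{1}{N}\,\text{tr}(D\, C^T W C)$.

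The core step is the trace inequality for positive operators, $\text{tr}(PQ) \geq \sum_m \lambda_m^\uparrow(P)\,\lambda_m^\downarrow(Q)$. Because $CC^T = I_{2d}$, the operator $C^T W C$ shares its nonzero spectrum with $WCC^T = W$, namely $\sigma_1^2, \sigma_1^2, \ldots, \sigma_d^2, \sigma_d^2$; pairing these decreasing values with the increasing eigenvalues of $D$ yields the sharp lower bound $\tfrac{2}{N}\sum_{k=1}^d (2\pi k)^2 \sigma_k^2$.

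The main obstacle will be the equality analysis in the presence of repeated singular values. Tightness in the trace inequality forces $D$ and $C^T W C$ to be simultaneously diagonalizable with matched eigenvalue ordering, but because $D$ already has multiplicity two at each frequency and the weight $\sigma_{k_{q,\min}}^2$ appears with multiplicity $2(k_{q,\max} - k_{q,\min} + 1)$ in $C^T W C$, the pairing of eigenvectors is determined only up to block. Concretely, equality forces the joint eigenspace of $C^T W C$ at weight $\sigma_{k_{q,\min}}^2$ to equal $\text{span}\{c_k, s_k : k \in [k_{q,\min}, k_{q,\max}]\}$, which is precisely the condition $\tilde\Phi_{[k_{q,\min},k_{q,\max}]} \in \Pi_{[k_{q,\min}, k_{q,\max}]}$. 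Carefully unwinding this block-wise Schur--Horn/von Neumann argument to rule out off-block leakage is the technical heart of the proof; once it is established, the characterization $\Phi \in \mathcal{S}^{2,d}(X)$ follows immediately.
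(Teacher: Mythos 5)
Your proposal is correct and follows essentially the same route as the paper: conjugate by the SVD factor $U$, pass to Fourier coefficients so that the quadratic variation becomes a weighted trace against the multiplier $m^2$, lower-bound via an eigenvalue rearrangement inequality, and read the block structure of $\mathcal{S}^{2,d}(X)$ off the equality condition --- the paper merely packages the rearrangement step as a Lagrange-multiplier Ky Fan-type lemma combined with Abel summation over the gaps $\sigma_s^2-\sigma_{s+1}^2$ rather than quoting von Neumann's trace inequality outright. The one substantive point your sketch elides is the infinite-dimensionality of the multiplier operator: the paper works with finite truncations $\hat\Phi_{s,M}$ of the coefficient matrix, proves convergence, and runs an $\varepsilon$-argument before extracting the equality case (which also establishes that minimizers have finitely supported Fourier coefficients); your eigenspace-matching argument yields the finite-support conclusion for free, but the trace inequality and its equality characterization still need to be justified for the unbounded diagonal operator $D$ against the finite-rank $C^TWC$.
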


This theorem completely specifies the degrees of freedom for the solutions to the minimization problem. The following corollary considers the simplification when the singular values of $X$ are distinct.

\begin{corollary}
If the $X=U\Sigma V^T\in\R^{d\times N}$ has distinct singular values, then $\Phi\in\mc{S}^{2,d}(X)$ if and only if there is a collection $\{Q_k\}_{k=1}^d\subset \mc{O}(2)$ such that $\widetilde{\Phi}=\Phi\cdot U$ has coordinate slices
\begin{align}
\widetilde{\phi}_{\cdot, k} = Q_k \begin{pmatrix}
c_k\\
s_k
\end{pmatrix}
\end{align}
\end{corollary}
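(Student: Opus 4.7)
The plan is to obtain the corollary as a direct specialization of Theorem~\ref{thm:1}. When the singular values $\sigma_1,\ldots,\sigma_d$ of $X$ are distinct, the equivalence relation $\sigma_k = \sigma_{k_{q,\min}}$ that determines the partition $\{[k_{q,\min},k_{q,\max}]\}_{q=1}^p$ of $[d]$ is trivial: each equivalence class is a singleton, so $p = d$ and the partition becomes $\{[k,k]\}_{k=1}^d$. Theorem~\ref{thm:1} then asserts that $\Phi \in \mc{S}^{2,d}(X)$ if and only if $\widetilde{\Phi}(t) = \Phi(t)U$ satisfies $\widetilde{\Phi}_{[k,k]} \in \Pi_{[k,k]} = \Pi_k$ for every $k \in [d]$.

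Next, I would unpack the definition of $\Pi_k$. By definition, $\Pi_k$ is the orbit of the $2\times 1$ matrix-valued function $\Psi_{k,k} = (c_k, s_k)^T$ under the action of $\mc{O}(2(k-k+1)) = \mc{O}(2)$ through the vectorization map $\alpha$. For a $2\times 1$ matrix, $\alpha$ is the identity map: it sends a column of height $2$ to itself. Consequently the action $(U \cdot \Psi)(t) = \alpha^{-1}(U\alpha(\Psi(t)))$ degenerates to ordinary left multiplication $(Q \cdot \Psi)(t) = Q\Psi(t)$ for $Q \in \mc{O}(2)$. Thus $\widetilde{\Phi}_{[k,k]} \in \Pi_k$ is equivalent to the existence of $Q_k \in \mc{O}(2)$ with
\begin{align}
\widetilde{\phi}_{\cdot,k}(t) = \begin{pmatrix}\widetilde{\phi}_{1,k}(t) \\ \widetilde{\phi}_{2,k}(t)\end{pmatrix} = Q_k\begin{pmatrix} c_k(t) \\ s_k(t)\end{pmatrix}.
\end{align}

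Combining these two observations yields the stated characterization; the forward implication uses Theorem~\ref{thm:1} to extract the $Q_k$, and the reverse implication uses the same theorem to verify membership in $\mc{S}^{2,d}(X)$ given the $Q_k$. No step presents a genuine obstacle: the only subtlety is confirming that the $\mc{O}(2k)$-action on $2\times k$ matrix-valued functions reduces to $\mc{O}(2)$ left multiplication when $k=1$, which is immediate from the definition of $\alpha$ on a single column.
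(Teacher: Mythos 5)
Your proposal is correct and matches the paper's intent exactly: the paper treats this corollary as an immediate specialization of Theorem \ref{thm:1}, having already noted that $\Pi_k=\Pi_{[k,k]}$ is precisely the orbit of $(c_k,s_k)^T$ under left multiplication by $\mc{O}(2)$. Your observation that the $\alpha$-mediated action collapses to ordinary left multiplication on a single column is the same (one-line) justification the paper relies on.
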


\begin{example}
Figure \ref{fig:tensorSlices} illustrates the matrix-valued function
\begin{align}
\Psi(t) = \begin{pmatrix}
c_1(t) & c_2(t) & c_3(t)\\
s_1(t) & s_2(t) & s_3(t)
\end{pmatrix}.
\end{align}
Given a data matrix $X\in\R^{3\times N}$ with distinct singular values and an SVD $X=U\Sigma V^T$, then $\Phi$ is a minimizer of Program \ref{mmqv} if and only if there are $2$ by $2$ rotation matrices $Q_1$, $Q_2$, and $Q_3$ such that
\begin{align}
\Phi(t) = \begin{pmatrix}
Q_1 \begin{pmatrix} c_1(t)\\ s_1(t)\end{pmatrix} & Q_2 \begin{pmatrix} c_2(t)\\ s_2(t)\end{pmatrix} & Q_3 \begin{pmatrix} c_3(t)\\ s_3(t)\end{pmatrix}
\end{pmatrix} U^T
\end{align}
for almost all $t\in[0,1]$. 
\end{example}

In the generic case, $X=U\Sigma V^T$ has distinct singular values and $\mc{S}^{2\times d}(X)$ is invariant under the action of $\mc{O}(2)^d$ that applies to each column after multiplication by $U^T$. We now exploit these degrees of freedom to push the time slices $\Phi(t)$ towards projections to obtain an approximate tour property. We let
\begin{align}
U(s) = \begin{pmatrix}
\cos(2\pi s) & -\sin(2\pi s)\\
\sin(2\pi s) & \cos(2\pi s)
\end{pmatrix}.
\end{align}

\begin{theorem}\label{thm:2}
Fix $\Phi\in\mc{S}^{2\times d}(I_d)$ by specifying $U_k = U(k^2/4d)$ for $k\in[d]$ and setting
\begin{align}
\phi_{\cdot, k} = U_k \begin{pmatrix}
c_k\\
s_k
\end{pmatrix}\text{ for all }k\in[d]
\end{align}
Then the scaled time slices $\sqrt{\frac{1}{d}}\Phi(t)$ all have singular values in the interval
\begin{align}
\left[\sqrt{1 - \left( \frac{4}{\sqrt{d}}+\frac{3}{2d}+\frac{1}{d^2}\right)},\:\sqrt{1 + \left( \frac{4}{\sqrt{d}}+\frac{3}{2d}+\frac{1}{d^2}\right)}\right].
\end{align}
for all $t\in[0,1]$. 
\end{theorem}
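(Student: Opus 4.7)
The plan is to reduce the singular value claim to a uniform bound on a quadratic exponential sum, and then estimate that sum using Gauss-sum techniques. First, I would compute $\Phi(t)\Phi(t)^T$ explicitly. The $k$-th column $v_k(t)$ of $\Phi(t)$ is $U_k(\sqrt{2}\cos(2\pi kt),\sqrt{2}\sin(2\pi kt))^T$, which under complex identification becomes $v_k(t)\leftrightarrow \sqrt{2}\,e^{i\theta_k(t)}$ with $\theta_k(t)=2\pi kt + 2\pi k^2/(4d)$. Applying double-angle identities to $v_k(t)v_k(t)^T$ yields $I_2 + R(2\theta_k(t))$, where
\[
R(\phi) = \begin{pmatrix} \cos\phi & \sin\phi \\ \sin\phi & -\cos\phi\end{pmatrix}
\]
is a trace-free symmetric matrix. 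Summing over $k$ produces $\Phi(t)\Phi(t)^T = dI_2 + M(t)$, where $M(t)$ preserves this trace-free symmetric form and therefore has operator norm equal to
\[
\|M(t)\|_{\mathrm{op}} = \left|\sum_{k=1}^d e^{2i\theta_k(t)}\right| = \left|\sum_{k=1}^d e^{i\pi(k^2+4kt)/d}\right| =: |S(t)|.
\]

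Once the bound $|S(t)|\le 4\sqrt{d}+3/2+1/d$ is established, the eigenvalues of $\Phi(t)\Phi(t)^T$ lie in $[d-|S(t)|,\,d+|S(t)|]$, so the singular values of $\tfrac{1}{\sqrt d}\Phi(t)$ lie in $[\sqrt{1-|S(t)|/d},\,\sqrt{1+|S(t)|/d}]$, which is contained in the stated interval. To bound $|S(t)|$, I would complete the square: since $k^2+4kt=(k+2t)^2-4t^2$ and $e^{-4\pi it^2/d}$ has unit modulus, $|S(t)|=\bigl|\sum_{k=1}^d e^{i\pi(k+2t)^2/d}\bigr|$, a partial shifted quadratic Gauss sum of conductor $2d$ with real shift $2t\in[0,2]$. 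The choice $U_k=U(k^2/4d)$ is engineered precisely to expose this Gauss-sum structure after combining with the native frequency $kt$. I would then apply Poisson summation with a half-weighted indicator on $[1,d]$, expressing $|S(t)|$ as a sum of Fresnel integrals indexed by $n\in\mathbb{Z}$: the $n=0$ frequency contributes the dominant $O(\sqrt d)$ term, the $n\ne 0$ frequencies decay by stationary or nonstationary phase, and endpoint corrections at $k=1,d$ contribute the $3/2$ and the $1/d$ residuals. A streamlined alternative is to invoke a Landsberg--Schaar style reciprocity identity that converts $S(t)$ into a short dual sum amenable to the triangle inequality.

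The main obstacle is obtaining \emph{sharp} explicit constants in the Gauss sum bound rather than merely $O(\sqrt d)$ growth, and doing so uniformly in the non-integer shift $2t\in[0,2]$. Standard Weyl-type and van der Corput estimates are useless here: with $\alpha=1/(2d)$, the rational-approximation denominator has the same order as the sum length $d$, so these methods yield only a trivial $O(d)$ bound. Proving $|S(t)|\le 4\sqrt d + 3/2 + 1/d$ forces careful bookkeeping of every endpoint and Fresnel-approximation constant through Poisson summation, or a careful application of a quantitative reciprocity formula; the paper's reference to ``recent results on quadratic Gauss sums'' strongly suggests the latter route is taken at the crucial step.
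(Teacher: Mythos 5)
Your proposal follows essentially the same route as the paper: your explicit computation of $\Phi(t)\Phi(t)^T = dI_2 + M(t)$ with $\Vert M(t)\Vert = \vert\sum_k e^{2i\theta_k(t)}\vert$ is precisely the content of the paper's Lemma \ref{lem:perturb}, and the sum is then bounded exactly as you anticipate at the crucial step --- by invoking an exact Fresnel-integral (complementary error function) expansion of the incomplete generalized Gauss sum (Equation 2.4 of the cited Paris reference, which is the Poisson-summation/reciprocity-type formula you describe) and tracking the explicit constants of the dominant term, the tail terms, and the endpoint corrections to arrive at $4\sqrt{d}+3/2+1/d$. The only slip is notational: the exponent is $e^{i\pi(k^2+4dkt)/d}$ rather than $e^{i\pi(k^2+4kt)/d}$, so the completed-square shift is $2dt$ rather than $2t$ (equivalently, the paper works with the linear phase $\theta = 2t \bmod 1 \in [-1/2,1/2]$ directly), which does not affect the strategy.
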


\begin{example}
Figure \ref{fig:qps} illustrates the behavior of the singular values of
\begin{align}
\frac{1}{2}\Psi(t) = \frac{1}{\sqrt{2}}\begin{pmatrix}
\cos(2\pi t) & \cos(4\pi t) & \cos(6\pi t) & \cos(8\pi t)\\
\sin(2\pi t) & \sin(4\pi t) & \sin(6\pi t) & \sin(8\pi t)
\end{pmatrix}
\end{align}
and $\Phi(t)$ obtained through quadratic phase shifts of the columns of $\Psi$ suggested by Theorem \ref{thm:2}.
\end{example}

\begin{figure}[h]
\centering
\includegraphics[scale=0.35]{./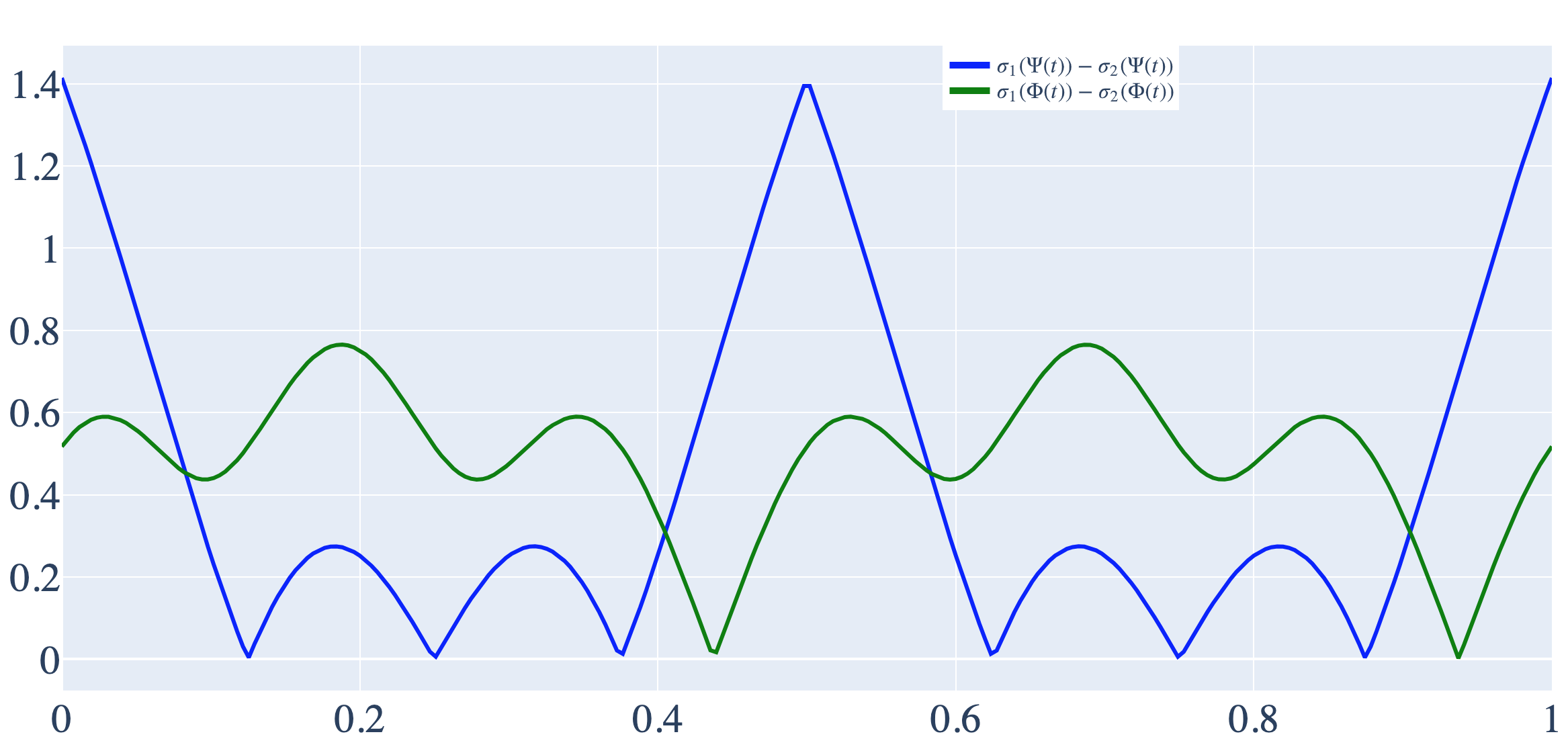}
\caption{Singular value gaps before and after quadratic phase shifts for minimizers $\Psi$ and $\Phi$ from Theorem \ref{thm:1} with $d=4$. Note that a gap of $\sqrt{2}$ (attained by $\Psi$) indicates a completely degenerate rank). }
\label{fig:qps}
\end{figure}

We note that the basis proposed by Khattree et al. \cite{khattree2002andrews} is obliquely related to this choice (for 1D Andrews plots). Their suggested basis consists of rotations of the orthogonal pairs corresponding to a rotation by $e^{\pi i/2}$, so one can envision how this might ``unroll" to a $\R^2$ example. However, this will still have degenerate projections near $t=1/2$.

\section{Minimizers of the MQV and $\mc{S}^{2,d}(X)$}\label{sec:param}

This section provides a proof for Theorem \ref{thm:1}. The proof exploits the vector-valued Fourier transform. Because we use closed curves, we have $\Phi[x]\in (L^2([0,1]))^2$ has the vector-valued Fourier representation
\begin{align}
\Phi[x](t)= \sum_{k\in\mathbb{Z}} \widehat{\Phi[x]}(k)e^{2\pi i k t}
\end{align}
where $\widehat{\Phi[x]}(k)\in \mathbb{C}^2$ and $\widehat{\Phi[x]}(-k)=\overline{\widehat{\Phi[x]}(-k)}$ since our curves are real-valued. 

For the next lemma, we define the Stiefel manifold $\text{St}(d,M)$ to be the set of all matrices $V\in \mathbb{C}^{d\times M}$ such that $VV^\ast=I_d$

\begin{lemma}\label{lem:1}
Suppose $0\leq \lambda_1 \leq \lambda_2 \leq \cdots \leq \lambda_M$, set $D=\text{\em diag}(\lambda_1,\lambda_2,\ldots, \lambda_M)\in \R^{M\times M}$. Then
\begin{align}
\sum_{k=1}^d \lambda_k = \min_{V\in \text{St}(d,M)} \text{\em trace}(V D V^\ast)
\end{align}
and $VDV^\ast= \sum_{k=1}^d \lambda_k$ if and only if there is a $Q\in \mc{U}(d)$ such that the rows of $Q^\ast V$ are an orthonormal collection of eigenvectors corresponding to the eigenvalues $\lambda_1,\ldots,\lambda_d$. 
\end{lemma}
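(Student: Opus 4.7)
The plan is to convert the trace minimization into a linear program in the diagonal entries of the projection $P := V^*V$, apply a rearrangement argument for the lower bound, and then extract the equality-case structure. First, writing
$$\mathrm{trace}(VDV^*) = \mathrm{trace}(DV^*V) = \sum_{m=1}^M \lambda_m\, p_m, \quad p_m := (V^*V)_{mm},$$
the constraint $VV^* = I_d$ forces $P = V^*V$ to be an orthogonal projection of rank $d$. Hence $\sum_m p_m = \mathrm{trace}(P) = d$ and each $p_m \in [0,1]$ (a standard consequence of the spectral decomposition of a projection). Rearrangement then gives $\sum_m \lambda_m p_m \geq \sum_{k=1}^d \lambda_k$, with equality attained by $V = \bigl(I_d \mid 0\bigr)$, confirming the minimum value.

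For the characterization of equality, the LP reasoning forces $p_m = 1$ whenever $\lambda_m < \lambda_d$ and $p_m = 0$ whenever $\lambda_m > \lambda_d$; otherwise one could transfer an infinitesimal amount of mass from an index with larger $\lambda$ to one with smaller $\lambda$ and strictly decrease the objective. Let $W \subseteq \mathbb{C}^M$ denote the sum of coordinate directions $e_m$ for which $\lambda_m \leq \lambda_d$, which equals the direct sum of eigenspaces of $D$ for the $d$ smallest eigenvalues (counted with multiplicity). The condition $p_m = 0$ for all $\lambda_m > \lambda_d$ means the row space of $V$ is orthogonal to every such $e_m$, hence contained in $W$; since the rows of $V$ are orthonormal and $\dim W = d$, this row space equals $W$.

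To produce $Q$, observe that $VDV^*$ is a $d \times d$ Hermitian matrix and pick $Q \in \mc{U}(d)$ that diagonalizes it: $VDV^* = Q\,\mathrm{diag}(\mu_1,\ldots,\mu_d)\,Q^*$. Because $V^* : \mathbb{C}^d \to W$ is an isometric isomorphism and $W$ is $D$-invariant, the compression $VDV^*$ is unitarily similar to $D|_W$ and so its spectrum is exactly $\{\lambda_1,\ldots,\lambda_d\}$. Therefore $(\mu_1,\ldots,\mu_d)$ is a permutation of $(\lambda_1,\ldots,\lambda_d)$, and after relabeling the columns of $Q$, the rows of $Q^*V$ are orthonormal eigenvectors of $D$ corresponding to $\lambda_1,\ldots,\lambda_d$. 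The converse is immediate: if $Q^*V$ has such rows then $VDV^* = Q\,\mathrm{diag}(\lambda_1,\ldots,\lambda_d)\,Q^*$ has trace $\sum_{k=1}^d \lambda_k$.

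The main obstacle I anticipate is the careful bookkeeping around ties at the boundary eigenvalue $\lambda_d$: in that degenerate regime, the boundary $p_m$'s are not uniquely determined, and one must verify both that the row space of $V$ is still forced into $W$ and that the compression $VDV^*$ retains the correct spectrum. Outside of this, the argument is the standard Ky Fan / Poincar\'e separation template.
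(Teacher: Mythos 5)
Your proof is correct in substance but takes a genuinely different route from the paper's. The paper obtains the lower bound by citing the Courant--Fischer min-max theorem and derives the equality characterization from the first-order Lagrange conditions $VD=\Lambda V$, $VV^\ast=I_d$: diagonalizing the Hermitian multiplier as $\Lambda=Q\Theta Q^\ast$ turns stationarity into $Q^\ast V D=\Theta Q^\ast V$, so the rows of $Q^\ast V$ are eigenvectors of $D$, and the identity $\mathrm{trace}(VDV^\ast)=\mathrm{trace}(\Lambda)=\sum_k\theta_k$ forces those eigenvalues to be the $d$ smallest. You instead reduce everything to a linear program in the diagonal entries $p_m=(V^\ast V)_{mm}$ of the rank-$d$ projection $V^\ast V$, which yields both the bound and the complementary-slackness conditions $p_m=1$ for $\lambda_m<\lambda_d$ and $p_m=0$ for $\lambda_m>\lambda_d$; this is more elementary and self-contained (no appeal to Courant--Fischer, and no need to check constraint regularity for the Lagrange conditions), at the cost of an extra spectral step to convert the support conditions on $V^\ast V$ back into the eigenvector statement. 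The one loose end is exactly the tie case $\lambda_d=\lambda_{d+1}$ that you flag: there $\dim W>d$, the row space of $V$ need not equal $W$, and ``unitarily similar to $D|_W$'' is not literally available. The fix is short. Equality forces $V^\ast Ve_m=e_m$ for every $m$ with $\lambda_m<\lambda_d$ (a projection with a diagonal entry equal to $1$ must fix that coordinate vector, by the Cauchy--Schwarz equality case) and forces the range of $V^\ast$ to be orthogonal to $e_m$ whenever $\lambda_m>\lambda_d$; hence $\mathrm{ran}(V^\ast)$ is sandwiched between $\mathrm{span}\{e_m:\lambda_m<\lambda_d\}$ and $W$. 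Since $D$ acts as the scalar $\lambda_d$ on the complementary tie block, every such intermediate $d$-dimensional subspace is $D$-invariant and the restriction of $D$ to it has spectrum exactly $\{\lambda_1,\dots,\lambda_d\}$ with multiplicity, so your diagonalization-of-the-compression argument goes through verbatim with $\mathrm{ran}(V^\ast)$ in place of $W$.
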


\begin{proof}
The lower bound follows from the Courant-Fischer min-max theorem. To obtain the optimality characterization, note that the constraints are regular, and hence the Lagrange conditions become
\begin{align}
V D = \Lambda V\text{ and } VV^\ast=I_d
\end{align}
for some Hermitian matrix of Lagrange multipliers $\Lambda\in \R^{d\times d}$. Since $\Lambda$ is Hermitian, it is diagonalizable, and in particular there is a unitary matrix $Q\in \mc{U}(d)$ such that $\Lambda = Q\Theta Q^\ast$ where $\Theta$ is a diagonal matrix with non-increasing diagonal entries. Then
\begin{align}
Q^\ast V D = \Theta Q^\ast V.
\end{align}
Therefore the rows of $Q^TV$ form eigenvectors of $D$ with eigenvalues $\theta_k$. Moreover, $Q^\ast V(Q^\ast V)^\ast=I_d$ gives that these eigenvectors are orthonormal. Finally,
\begin{align}
\text{trace} (V DV^\ast)= \text{trace}(\Lambda VV^\ast) = \text{trace}(\Lambda)=\sum_{k=1}^d \theta_k
\end{align}
so eigenvalues $\{\theta_k\}_{k=1}^d$ must match with the eigenvalues $\{\lambda_k\}_{k=1}^d$ to ensure equality for the lower bound.
\end{proof}

This lemma allows us to precisely characterize the degrees of freedom of solutions to trace optimization problems subject to orthonormality constraints. 

\begin{lemma}\label{lem:mqv}
Suppose $X\in \R^{d\times N}$ has the singular value decomposition $X=U\Sigma V^T$ \footnote{We note that the singular values (i.e. the diagonal entries of $\Sigma$) are in the standard, non-increasing order.}. For $\Phi\in \mc{H}^{2\times d}$, define $\widetilde{\Phi}\in\mc{H}^{2\times d}$ by setting $\widetilde{\Phi}(t) = \Phi(t)U^T$, and set
\begin{align}
h_{j,k}= \left\Vert \frac{d\widetilde{\phi}_{j,k}}{dt}\right\Vert_{L^2}^2
\end{align}
for all $j\in[2]$ and $k\in[d]$. Then
\begin{align}
\text{\em MQV}(\Phi;X) = \sum_{k=1}^{d-1}&(\sigma_k^2-\sigma_{k+1}^2) \sum_{l=1}^k (h_{1,l} + h_{2,l}) + \sigma_d^2 \sum_{k=1}^d (h_{1,k} + h_{2,k})
\end{align}
\end{lemma}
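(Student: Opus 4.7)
The plan is to reduce the sum-of-squares definition of MQV to a single trace, diagonalize that trace using the SVD of the data matrix, and then rearrange the resulting diagonal sum by Abel summation (summation by parts) into the ``staircase" form appearing in the lemma.

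First, using the matrix representation $\Phi[x](t)=\Phi(t)x$ from Proposition \ref{prop:1}, the chain rule gives $\Vert d\Phi[x_n]/dt\Vert_{L^2}^2 = x_n^T G\, x_n$, where
\begin{align*}
G = \int_0^1 \Phi^\prime(t)^T \Phi^\prime(t)\, dt \in \R^{d\times d}
\end{align*}
does not depend on $n$. Aggregating over the dataset and using $\sum_n x_n^T G x_n = \text{trace}(G X X^T)$,
\begin{align*}
\text{MQV}(\Phi;X) = \tfrac{1}{N}\,\text{trace}(G X X^T),
\end{align*}
so the MQV depends on $\Phi$ only through the $d\times d$ Gram tensor $G$ of $\Phi^\prime$ and on $X$ only through $XX^T$.

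Next, the SVD $X=U\Sigma V^T$ gives $XX^T = U D U^T$ with $D=\text{diag}(\sigma_1^2,\ldots,\sigma_d^2)$. Introducing the column-rotated system $\widetilde{\Phi}(t)=\Phi(t)U$ (so that $\widetilde{\Phi}^\prime(t)=\Phi^\prime(t)U$), we get $U^T G U = \int_0^1 \widetilde{\Phi}^\prime(t)^T \widetilde{\Phi}^\prime(t)\,dt =: \widetilde{G}$, and hence
\begin{align*}
\text{MQV}(\Phi;X) = \tfrac{1}{N}\,\text{trace}(\widetilde{G}\, D) = \tfrac{1}{N}\sum_{k=1}^d \sigma_k^2 \widetilde{G}_{k,k}.
\end{align*}
The diagonal entries are exactly the column-norms of $\widetilde{\Phi}^\prime$, namely $\widetilde{G}_{k,k} = h_{1,k}+h_{2,k}$. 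Here I would be careful about the transpose convention when choosing $\widetilde{\Phi}(t)=\Phi(t)U$ versus $\Phi(t)U^T$ (the former is what actually diagonalizes against $XX^T$), and about whether the lemma's stated formula absorbs $1/N$ into the convention for $\sigma_k^2$ or leaves it as an explicit prefactor.

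Finally, I would convert $\sum_k \sigma_k^2(h_{1,k}+h_{2,k})$ to the stated form by Abel summation. Setting $a_k=h_{1,k}+h_{2,k}$ and $S_k=\sum_{l=1}^k a_l$, the identity
\begin{align*}
\sum_{k=1}^d \sigma_k^2 a_k = \sum_{k=1}^{d-1}(\sigma_k^2 - \sigma_{k+1}^2) S_k + \sigma_d^2 S_d
\end{align*}
(verified by induction on $d$, or directly by telescoping the right-hand side) reproduces the claim exactly. I do not anticipate any genuine obstacle: the argument is essentially an algebraic identity, and the only two places that require real attention are the transpose convention in step two and the summation-by-parts bookkeeping in step three. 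The reason the lemma is useful for Theorem \ref{thm:1} is that the right-hand side exhibits the MQV as a nonnegative linear combination of partial sums $S_k$ with coefficients $\sigma_k^2-\sigma_{k+1}^2\geq 0$, which is precisely the shape needed to invoke Lemma \ref{lem:1} block-by-block on the eigenvalue plateaus indexed by the partition $\{[k_{q,\min},k_{q,\max}]\}$.
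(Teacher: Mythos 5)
Your proposal is correct and follows essentially the same route as the paper's proof: reduce the MQV to $\tfrac{1}{N}\,\mathrm{trace}(G\,XX^T)$, conjugate by $U$ from the SVD to diagonalize, read off the diagonal entries as $h_{1,k}+h_{2,k}$, and finish with Abel summation (the paper keeps the two component slices in a $2d\times 2d$ block-diagonal trace rather than summing them into a single $d\times d$ Gram matrix, which is immaterial). The two conventions you flag are genuine quirks of the paper itself: its proof in fact uses $\widetilde{\Phi}(t)=\Phi(t)U$ even though the lemma statement writes $\Phi(t)U^T$, and the $\tfrac{1}{N}$ prefactor silently disappears between the trace identity and the final displayed sum.
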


\begin{proof}
For any $x\in\R^d$, linearity of integration and matrix multiplication yield
\begin{align}
\text{QV}(\Phi[x]) &= \int_0^1\left\Vert \frac{d\Phi[x]}{dt}(t) \right\Vert^2\:dt\\
&= \int_0^1\left[ \left( \frac{d\phi_{1,\cdot}}{dt}(t) x \right)^2 + \left( \frac{d\phi_{2,\cdot}}{dt}(t) x \right)^2 \right]\:dt\\
&= \int_0^1\left[ x^T \left(\frac{d\phi_{1,\cdot}}{dt}(t)\right)^T\frac{d\phi_{1,\cdot}}{dt}(t) x  + x^T \left(\frac{d\phi_{2,\cdot}}{dt}(t)\right)^T\frac{d\phi_{2,\cdot}}{dt}(t) x \right]\:dt\\
& = x^T \left(\int_0^1\left(\frac{d\phi_{1,\cdot}}{dt}(t)\right)^T\frac{d\phi_{1,\cdot}}{dt}(t)\:dt \right) x + x^T \left(\int_0^1\left(\frac{d\phi_{2,\cdot}}{dt}(t)\right)^T\frac{d\phi_{2,\cdot}}{dt}(t)\:dt \right) x
\end{align}
Define $G_{j,j^\prime}(\Phi)\in\R^{d\times d}$ by
\begin{align}
G_{j,j^\prime}(\Phi) = \int_0^1\left(\frac{d\phi_{j,\cdot}}{dt}(t)\right)^T\frac{d\phi_{j^\prime,\cdot}}{dt}(t)\:dt\text{ for } j,j^\prime\in[2].\label{defG}
\end{align}
Then we have 
\begin{align}
\text{QV}(\Phi[x]) & =x^T G_{1,1}(\Phi) x + x^T G_{2,2}(\Phi) x\\
& = \text{trace}\left[ \begin{pmatrix} xx^T & 0\\
0 & xx^T
\end{pmatrix} \begin{pmatrix}
G_{1,1}(\Phi) & 0 \\
0 & G_{2,2}(\Phi)
\end{pmatrix}\right].
\end{align}
From this expression, it follows that
\begin{align}
\text{MQV}(\Phi; x)  = \frac{1}{N}\text{trace} \left[\begin{pmatrix} XX^T & 0\\
0 & XX^T
\end{pmatrix} \begin{pmatrix}
G_{1,1}(\Phi) & 0 \\
0 & G_{2,2}(\Phi)
\end{pmatrix}\right]
\end{align}

Setting $\Lambda =\Sigma \Sigma^T$, note that $U^TG_{j,j}(\Phi)U=G_{j,j}(\widetilde{\Phi})$, and so invariance of the trace under conjugation by $U$ gives
\begin{align}
\text{MQV}(\Phi;X) &= \frac{1}{N} \text{trace} \left[ \begin{pmatrix}\Lambda & 0\\
0 & \Lambda
\end{pmatrix} \begin{pmatrix}
G_{1,1}(\widetilde{\Phi}) &0 \\
0 & G_{2,2}(\widetilde{\Phi})
\end{pmatrix}\right]\\
& = \sigma_1^2 h_{1,1} + \sigma_1^2 h_{2,1} + \cdots + \sigma_d^2 h_{1,d} + \sigma_d^2 h_{2,d}
\end{align}
where the last line follows because $\Lambda = \text{diag}(\sigma_1^2,\ldots, \sigma_d^2)$ and 
\begin{align}
\text{diag}(G_{j,j}(\widetilde{\Phi}))=\begin{pmatrix} h_{j,1} & h_{j,2} & \cdots & h_{j,d}\end{pmatrix}^T
\end{align}
for $j\in[2]$. The Abel summation formula yields
\begin{align}
\text{MQV}(\Phi;X) = &(\sigma_1^2-\sigma_2^2) (h_{1,1} + h_{2,1})\\
&+ (\sigma_2^2 - \sigma_3^2)(h_{1,1} + h_{2,1} + h_{1,2} + h_{2,2})\\
& + \ldots\\
& + (\sigma_{d-1}^2 - \sigma_d^2)\left( \sum_{k=1}^{d-1}(h_{1,k} + h_{2,k})\right) + \sigma_d^2 \sum_{k=1}^d (h_{1,k} + h_{2,k})
\end{align}
and the result is complete.
\end{proof}

The form of the $\text{MQV}$ derived in Lemma \ref{lem:mqv} allows us to also characterize the minimum value of the $\text{MQV}$. This minimum value is specified in Lemma \ref{lem:mqvBnd}.

\begin{lemma}\label{lem:mqvBnd}
Suppose $d$ and $N$ are natural numbers with $d\leq N$ and let $X\in\R^{d\times N}$ have a singular value decomposition $X=U\Sigma V^T$. For all $\Phi\in\mc{H}^{2\times d}$, satisfying
\begin{align}
\langle\phi_{j,k},\phi_{j^\prime, k^\prime}\rangle_{L^2} =\delta_{(j,k),(j^\prime,k^\prime)}\text{ for all }j,j^\prime\in[2],\:k,k^\prime\in[d]
\end{align}
and
\begin{align}
\langle\phi_{j,k}, {\bf 1}_{[0,1]}\rangle_{L^2}=0\text{ for all }j\in[2],\: k\in[d],
\end{align}
\begin{align}
\text{\em MQV}(\Phi;X) \geq 2\sum_{s=1}^{d-1}(\sigma_s^2-\sigma_{s+1}^2) \sum_{k=1}^s k^2 + 2\sigma_d^2 \sum_{k=1}^d k^2 \label{mqvkbnd}
\end{align}
Moreover, equality holds if and only if
\begin{align}
\sum_{k=1}^{s} (h_{1,k} + h_{2,k}) = 2 \sum_{k=1}^s k^2\label{mqveq}
\end{align}
for all $s$ such that $\sigma_s \not= \sigma_{s+1}$ or $s=d$, where $h_{j,k}$ is defined in the previous lemma.
\end{lemma}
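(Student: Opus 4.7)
The plan is to combine the expansion of $\text{MQV}(\Phi;X)$ from Lemma \ref{lem:mqv} with a Fourier-analytic lower bound on the partial sums $A_s := \sum_{l=1}^s(h_{1,l}+h_{2,l})$. First observe that the substitution $\widetilde\Phi(t) = \Phi(t)U^T$ preserves both the orthonormality and the zero-mean constraints on the scalar entries $\widetilde\phi_{j,k}$: $U^T$ acts on the $d$ entries in each row $j$ by an orthogonal mixing, and orthonormality of such a finite family in $L^2$ (together with the scalar zero-mean condition) is preserved under orthogonal recombination. Rewriting Lemma \ref{lem:mqv} as
\begin{equation*}
\text{MQV}(\Phi;X) = \sum_{s=1}^{d-1}(\sigma_s^2-\sigma_{s+1}^2) A_s + \sigma_d^2 A_d,
\end{equation*}
with all weights nonnegative, the bound (\ref{mqvkbnd}) reduces to establishing the partial-sum inequality $A_s \geq 2\sum_{l=1}^s l^2$ for every $s\in[d]$.

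To prove the partial-sum bound, I expand the $2s$ functions $\{\widetilde\phi_{j,k}: j\in[2],\,k\in[s]\}$ in the orthonormal Fourier basis $\{c_l,s_l\}_{l\geq 1}$ of the zero-mean subspace of $L^2([0,1])$; the zero-mean hypothesis suppresses any constant term. Let $V$ denote the resulting $2s\times\infty$ coefficient matrix. Parseval identifies the $L^2$-orthonormality of the $\widetilde\phi_{j,k}$ with the Stiefel condition $VV^\ast = I_{2s}$. Because differentiating $c_l$ or $s_l$ rescales by a factor proportional to $l$, and the derivatives remain mutually orthogonal across frequencies, we obtain $A_s = \text{trace}(VDV^\ast)$ with $D$ the diagonal operator whose eigenvalues are $l^2$ (absorbing the universal constant common to both sides of (\ref{mqvkbnd})), each of multiplicity two. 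Applying Lemma \ref{lem:1} to this trace minimization gives
\begin{equation*}
A_s \geq \sum_{k=1}^{2s}\lambda_k = 2\sum_{l=1}^s l^2,
\end{equation*}
since the smallest $2s$ eigenvalues are $\{l^2,l^2\}_{l=1}^s$.

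Summing these partial-sum inequalities against the nonnegative weights $\sigma_s^2-\sigma_{s+1}^2$ and $\sigma_d^2$ yields (\ref{mqvkbnd}). For the equality characterization, strict positivity of the weights dictates exactly where tightness must hold: $A_s = 2\sum_{l=1}^s l^2$ is required at each $s\in[d-1]$ with $\sigma_s\neq\sigma_{s+1}$, together with $s=d$, which matches (\ref{mqveq}) verbatim. The main technical delicacy will be to avoid over-constraining the equality conditions: at indices $s$ with repeated singular values the term $A_s$ is weighted by zero and need not be tight, and the unitary freedom supplied by Lemma \ref{lem:1} within each eigenspace of $D$ is precisely the orbit freedom that will later be packaged into the definition of $\mc{S}^{2,d}(X)$ in the proof of Theorem \ref{thm:1}.
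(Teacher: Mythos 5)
Your overall strategy is the same as the paper's: use the Abel-summed form of $\text{MQV}$ from Lemma \ref{lem:mqv} together with nonnegativity of the weights $\sigma_s^2-\sigma_{s+1}^2$ to reduce everything to the partial-sum inequality $\sum_{k=1}^s(h_{1,k}+h_{2,k})\geq 2\sum_{k=1}^s k^2$, prove that inequality by Parseval plus the trace-minimization Lemma \ref{lem:1}, and read off the equality cases from the fact that each deficit $A_s-2\sum_{k\leq s}k^2$ is nonnegative and is multiplied by a nonnegative weight. The reduction and the equality characterization are handled correctly.

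The gap is in the sentence ``Applying Lemma \ref{lem:1} to this trace minimization.'' Lemma \ref{lem:1} is stated and proved only for the finite Stiefel manifold $\text{St}(d,M)$ with $V\in\mathbb{C}^{d\times M}$ and a finite diagonal $D$: its proof uses Lagrange multipliers on a compact constraint set and Courant--Fischer for finite matrices. Your $V$ is a $2s\times\infty$ coefficient matrix and your $D$ is an unbounded diagonal operator with eigenvalues $l^2$, so the lemma does not apply as stated. The obvious repair --- truncate to frequencies $\vert m\vert\leq M$ and apply Lemma \ref{lem:1} there --- fails immediately, because the truncated coefficient matrix $\hat\Phi_{s,M}$ no longer has orthonormal rows; it only satisfies $\hat\Phi_{s,M}\hat\Phi_{s,M}^\ast\to I_{2s}$. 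Handling this is the technical bulk of the paper's proof: it renormalizes to $V_{s,M}=(\hat\Phi_{s,M}\hat\Phi_{s,M}^\ast)^{-1/2}\hat\Phi_{s,M}$, which does lie on the Stiefel manifold, replaces the weights $m^2$ for $\vert m\vert>s$ by the clipped value $(s+1)^2$ (which both preserves the lower bound and keeps the smallest $2s$ eigenvalues equal to $1,1,\dots,s^2,s^2$ independently of $M$, and whose strictness later forces minimizers to have finitely supported Fourier coefficients), and controls the discrepancy $\text{trace}((I_{2s}-(\hat\Phi_{s,M}\hat\Phi_{s,M}^\ast)^{-1})\hat\Phi_{s,M}\mc{Q}_{s,M}\hat\Phi_{s,M}^\ast)$ by an $\varepsilon$-argument before letting $M\to\infty$. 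An alternative repair would be to invoke an infinite-dimensional Ky Fan / min-max inequality for the unbounded operator $-d^2/dt^2$ restricted to the zero-mean subspace, but that is a different (and stronger) statement than Lemma \ref{lem:1}, and you would need to state and justify it; as written, your proof rests on a version of the lemma that has not been established.
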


\begin{proof} Without loss of generality, we assume $U=I$ so that $\widetilde{\Phi}=\Phi$ from the previous lemma. We define the block matrix
\begin{align}
G = \begin{pmatrix}
G_{1,1}(\Phi) & G_{1,2}(\Phi)\\
G_{2,1}(\Phi) & G_{2,2}(\Phi)
\end{pmatrix}
\end{align}
where $G_{j,j^\prime}(\Phi)$ is defined by equation (\ref{defG}), and we define the submatrices
\begin{align}
G^{(s)} = \begin{pmatrix}
G_{1,1}^{(s)}(\Phi) & G_{1,2}^{(s)}(\Phi)\\
G_{2,1}^{(s)}(\Phi) & G_{2,2}^{(s)}(\Phi)
\end{pmatrix}
\end{align}
where $G_{j,j^\prime}^{(s)}(\Phi)$ is the $s$ by $s$ submatrix in the upper left corner of $G_{j,j^\prime}(\Phi)$ for $j,j^\prime\in[2]$. 

Since $\phi_{j,k}\in \mc{H}$, $\phi_{j,k}$ admits a Fourier transform $\hat \phi_{j,k}\in\ell^2(\mathbb{Z})$. The condition $\int_0^1 \phi_{j,k}(t)\:dt=0$ is equivalent to $\hat\phi_{j,k}(0)=0$. Moreover, Parseval-Plancherel provides
\begin{align}
\left[G_{j,j^\prime}(\Phi)\right]_{k,k^\prime}=\left\langle  \frac{d\phi_{j,k}}{dt},\:\frac{d\phi_{j^\prime,k^\prime}}{dt}\right\rangle_{L^2} = \sum_{m\in\mathbb{Z}} m^2 \hat \phi_{j,k}(m) \overline{\hat \phi_{j^\prime,k^\prime}(m)}\label{infsum}
\end{align}
For all natural numbers $M$, we set
\begin{align}
\hat\Phi_{s,M} = \begin{pmatrix}
\hat\phi_{1,1}(-M) & \hat\phi_{1,1}(-M+1) & \cdots & \hat\phi_{1,1}(-1) & \hat\phi_{1,1}(1) & \cdots & \hat\phi_{1,1}(M-1) & \hat\phi_{1,1}(M)\\
\vdots & \vdots & \ddots & \vdots & \vdots & \ddots & \vdots & \vdots\\
\hat\phi_{1,s}(-M) & \hat\phi_{1,s}(-M+1) & \cdots & \hat\phi_{1,s}(-1) & \hat\phi_{1,s}(1) & \cdots & \hat\phi_{1,s}(M-1) & \hat\phi_{1,s}(M)\\
\hat\phi_{2,1}(-M) & \hat\phi_{2,1}(-M+1) & \cdots & \hat\phi_{2,1}(-1) & \hat\phi_{2,1}(1) & \cdots & \hat\phi_{2,1}(M-1) & \hat\phi_{2,1}(M)\\
\vdots & \vdots & \ddots & \vdots & \vdots & \ddots & \vdots & \vdots \\
\hat\phi_{2,s}(-M) & \hat\phi_{2,s}(-M+1) & \cdots & \hat\phi_{2,s}(-1) & \hat\phi_{2,s}(1) & \cdots & \hat\phi_{2,s}(M-1) & \hat\phi_{2,s}(M)\\
\end{pmatrix}
\end{align}
so that $\hat\Phi_{s,M} \in \mathbb{C}^{2s\times 2M}$. More compactly, we have that
\begin{align}
\hat\Phi_{s,M} = \begin{pmatrix}
\hat\Phi_{1,s,M}\\ \hat\Phi_{2,s,M}
\end{pmatrix}
\end{align}
with $\hat\Phi_{j,s,M}\in\mathbb{C}^{s\times 2M}$. Now, define $\mc{Q}_{M}\in\mathbb{R}^{2M\times 2M}$ by
\begin{align}
\mc{Q}_{M} = \text{diag}(M^2, (M-1)^2, \ldots, 1, 1,\ldots, (M-1)^2, M^2).
\end{align}
so that
\begin{align}
\hat\Phi_{s,M} \mc{Q}_{M} \hat\Phi_{s,M}^\ast = \begin{pmatrix}
\hat\Phi_{1,s,M}  \mc{Q}_{M} \hat\Phi_{1,s,M}^\ast & \hat\Phi_{1,s,M}  \mc{Q}_{M} \hat\Phi_{2,s,M}^\ast \\
\hat\Phi_{2,s,M}  \mc{Q}_{M} \hat\Phi_{1,s,M}^\ast & \hat\Phi_{2,s,M}  \mc{Q}_{M} \hat\Phi_{2,s,M}^\ast 
\end{pmatrix}.
\end{align}
Furthermore, 
\begin{align}
[\hat\Phi_{j,s,M}  \mc{Q}_{M} \hat\Phi_{j^\prime,s,M}^\ast]_{k,k^\prime} = \sum_{\vert m\vert \leq M} m^2 \hat \phi_{j,k}(m) \overline{\hat \phi_{j^\prime,k^\prime}(m)}.
\end{align}
These are partial sums of the convergent infinite series in Equation \ref{infsum}, so we may conclude that $\hat\Phi_{s,M} \mc{Q}_{M} \hat\Phi_{s,M}^\ast\to G^{(s)}$ as $M\to\infty$.

We now define
\begin{align}
\mc{Q}_{s,M} = \begin{pmatrix}
(s+1)^2 I_{M-s} & 0 & 0 \\
0 & \mc{Q}_s & 0\\
0 & 0 & (s+1)^2 I_{M-s}
\end{pmatrix}
\end{align}
for all $M > s$ where
\begin{align}
\mc{Q}_{s} = \text{diag}(s^2, (s-1)^2, \ldots, 1, 1,\ldots, (s-1)^2, s^2).
\end{align}
Now, observe that
\begin{align}
 [\hat \Phi_{j,s,M} \mc{Q}_{s,M} \hat\Phi_{j,s,M}^\ast]_{k,k^\prime} = \sum_{\vert m\vert \leq s}m^2 \hat \phi_{j,k}(m) \overline{\hat \phi_{j^\prime,k^\prime}(m)} + (s+1)^2 \sum_{s < \vert m \vert \leq M} \hat \phi_{j,k}(m) \overline{\hat \phi_{j^\prime,k^\prime}(m)}
\end{align}
This converges as $M\to\infty$ since the tail is a scalar multiple of the convergent series $\langle \phi_{j,k}, \phi_{j^\prime,k^\prime}\rangle_{L^2}=\sum_{ m \in\mathbb{Z}} \hat \phi_{j,k}(m) \overline{\hat \phi_{j^\prime,k^\prime}(m)}$ (again by Parseval-Plancherel since $\phi_{j,k}\in \mc{H}$). 
Since all of these series converge, we set
\begin{align}
H^{(s)}=\lim_{M\to\infty} \hat \Phi_{s,M} \mc{Q}_{s,M} \hat\Phi_{s,M}^\ast.
\end{align}
In particular 
\begin{align}
\lim_{M\to\infty} \sum_{\vert m\vert \leq s} m^2 \vert \hat\phi_{1,1}(m)\vert^2 + (s+1)^2\sum_{s<\vert m\vert \leq M} \vert \hat\phi_{1,1}(m)\vert^2=H^{(s)}_{1,1}\text{ for all }s\in[d],
\end{align}
and
\begin{align}
1 = \Vert \phi_{1,1}\Vert^2_{L^2}=\sum_{m\in\mathbb{Z}\setminus\{0\}}\vert \hat\phi_{1,1}(m)\vert^2
\end{align}
implies $\Vert H^{(s)}\Vert >0$ for all $s\in[d]$. Therefore, there is an $M_0$ such that $M\geq M_0$ implies 
\begin{align}
\Vert\hat \Phi_{s,M} \mc{Q}_{s,M} \hat\Phi_{s,M}^\ast - H^{(s)}\Vert\leq \Vert H^{(s)}\Vert.
\end{align} 
The reverse triangle inequality then establishes the bounds
\begin{align}
\Vert\hat \Phi_{s,M} \mc{Q}_{s,M} \hat\Phi_{s,M}^\ast\Vert\leq 2\Vert H^{(s)}\Vert\text{ for all }M\geq M_0.\label{normBnd}
\end{align}

Let $\varepsilon>0$, and set
\begin{align}
\varepsilon^\prime = \left(\frac{\varepsilon}{2\sqrt{2s}\Vert H^{(s)}(\Phi)\Vert+\varepsilon}\right) \wedge \frac{1}{2}
\end{align}
Observing that $\lim_{M\to\infty} \hat\Phi_{s,M}\hat\Phi_{s,M}^\ast = I_{2s}$, let $M^\prime$ be so large that $M\geq M^\prime$ implies
\begin{align}
\Vert \hat\Phi_{s,M}\hat\Phi_{s,M}^\ast - I_{2s}\Vert <\varepsilon^\prime.
\end{align}
This implies that the eigenvalues of $\hat\Phi_{s,M}\hat\Phi_{s,M}^\ast$ are bounded below by $\frac{1}{2}\leq 1-\varepsilon^\prime$. Thus, $\hat\Phi_{s,M}\hat\Phi_{s,M}^\ast$ is invertible for all $M\geq M^\prime$. Moreover, the eigenvalues of $\hat\Phi_{s,M}\hat\Phi_{s,M}^\ast$ satisfy
\begin{align}
\vert 1-\nu_k^{-1}\vert \leq \frac{\varepsilon^\prime}{1-\varepsilon^\prime}\text{ for all } M\geq M^\prime
\end{align}
Taking the square root sum of squares we get
\begin{align}
\left\Vert I_{2s} - (\hat\Phi_{s,M}\hat\Phi_{s,M}^\ast)^{-1}\right\Vert \leq \sqrt{2s} \frac{\varepsilon^\prime}{(1-\varepsilon^\prime)}\leq \frac{\varepsilon}{2\Vert H^{(s)}\Vert}.\label{ubinv}
\end{align}
Also note that $V_{s,M} = (\Phi_{s,M} \Phi_{s,M}^\ast)^{-1/2} \Phi_{s,M}$ has orthonormal rows for $M\geq M^\prime$.

Then (with $h_{j,k}$ defined in Lemma \ref{lem:mqv}),
\begin{align}
&\sum_{k=1}^s(h_{1,k}+h_{2,k})\\
 & = \sum_{k=1}^s \sum_{m\in\mathbb{Z}} m^2 ( \vert\hat\phi_{1,k}(m)\vert^2 + \vert\hat\phi_{2,k}(m)\vert^2)\\
&\geq \sum_{k=1}^s \left[\sum_{\vert m\vert\leq s} m^2 (\vert\hat\phi_{1,k}(m)\vert^2 + \vert\hat\phi_{2,k}(m)\vert^2) + (s+1)^2 \sum_{s < \vert m \vert }(\vert\hat\phi_{1,k}(m)\vert^2 + \vert\hat\phi_{2,k}(m)\vert^2)\right]\label{lb2}\\
& \geq \sum_{k=1}^s \left[\sum_{\vert m\vert\leq s} m^2 (\vert\hat\phi_{1,k}(m)\vert^2 + \vert\hat\phi_{2,k}(m)\vert^2) + (s+1)^2 \sum_{\substack{
s < \vert m \vert \\
\vert m \vert \leq M}
}(\vert\hat\phi_{1,k}(m)\vert^2 + \vert\hat\phi_{2,k}(m)\vert^2)\right]\label{lb1}\\
&=\text{trace}\left(\hat\Phi_{s,M} \mc{Q}_{s,M} \hat\Phi_{s,M}^\ast\right) - \text{trace}\left(V_{s,M} \mc{Q}_{s,M} V_{s,M}^\ast\right) + \text{trace}\left(V_{s,M} \mc{Q}_{s,M} V_{s,M}^\ast\right)\\
& = \text{trace}\left(V_{s,M} \mc{Q}_{s,M} V_{s,M}^\ast\right) + \text{trace}\left(\left(I_{2s} - (\hat\Phi_{s,M}\hat\Phi_{s,M}^\ast)^{-1}\right)\hat\Phi_{s,M} \mc{Q}_{s,M} \hat\Phi_{s,M}^\ast\right)
\end{align}
Recall that
\begin{align}
\left\Vert \hat\Phi_{s,M} \mc{Q}_{s,M} \hat\Phi_{s,M}^\ast\right\Vert \leq 2\Vert H^{(s)}\Vert
\end{align}
for all $M\geq M_0$. Cauchy-Schwarz on the Hilbert-Schmidt inner product now yields
\begin{align}
\sum_{k=1}^s(h_{1,k}+h_{2,k}) & \geq \text{trace}\left(V_{s,M} \mc{Q}_{s,M} V_{s,M}^\ast\right) - \left\Vert I_{2s} - (\hat\Phi_{s,M}\hat\Phi_{s,M}^\ast)^{-1}\right\Vert \left\Vert \hat\Phi_{s,M} \mc{Q}_{s,M} \hat\Phi_{s,M}^\ast\right\Vert\\
& \geq \text{trace}\left(V_{s,M} \mc{Q}_{s,M} V_{s,M}^\ast\right) - \varepsilon,
\end{align}
where this last bound follows from the inequalities (\ref{normBnd}) and  (\ref{ubinv}) as long as $M > \max(s,M_0, M^\prime)$.

By applying Lemma \ref{lem:1} to $\mc{Q}_{s,M}$ and $V_{s,M}$,
\begin{align}
 \text{trace}\left(V_{s,M} \mc{Q}_{s,M} V_{s,M}^\ast\right) \geq \sum_{k=1}^s 2k^2,
\end{align}
and we conclude that
\begin{align}
&\sum_{k=1}^s \left[\sum_{\vert m\vert\leq s} m^2 (\vert\hat\phi_{1,k}(m)\vert^2 + \vert\hat\phi_{2,k}(m)\vert^2) + (s+1)^2 \sum_{s < \vert m \vert }(\vert\hat\phi_{1,k}(m)\vert^2 + \vert\hat\phi_{2,k}(m)\vert^2)\right]\\
 &\geq\left( \sum_{k=1}^s 2k^2\right) - \varepsilon.
\end{align}
Since $\varepsilon>0$ was arbitrary, the approximation property of inequalities yields
\begin{align}
\sum_{k=1}^s \left[\sum_{\vert m\vert\leq s} m^2 (\vert\hat\phi_{1,k}(m)\vert^2 + \vert\hat\phi_{2,k}(m)\vert^2) + (s+1)^2 \sum_{s < \vert m \vert }(\vert\hat\phi_{1,k}(m)\vert^2 + \vert\hat\phi_{2,k}(m)\vert^2)\right] \geq \sum_{k=1}^s 2k^2.\label{lb3}
\end{align}
Then the inequality (\ref{lb2}) gives
\begin{align}
\sum_{k=1}^s(h_{1,k}+h_{2,k}) & \geq \sum_{k=1}^s 2k^2.\label{partmqvbnd}
\end{align}
for all $s=1,\ldots, d$. The non-increasing order of $\sigma_s$ gives $\sigma_s-\sigma_{s+1}\geq 0$ for all $s$, and hence the inequalities in Equation \ref{partmqvbnd} imply the bound in Equation \ref{mqvkbnd} by Lemma \ref{lem:mqv}.

Finally, we note that equality in Equation \ref{mqvkbnd} holds trivially if Equation \ref{mqveq} holds for all $s$ such that $\sigma_s\not=\sigma_{s+1}$ or $s=d$ by Lemma \ref{lem:mqv}. On the other hand, if equality holds in Equation \ref{mqvkbnd}, then Lemma \ref{lem:mqv} implies
\begin{align}
\sum_{s=1}^{d-1} (\sigma_s^2-\sigma_{s+1}^2) \left(\sum_{k=1}^s(h_{1,k}+h_{2,k})  - 2\sum_{k=1}^sk^2\right) + 2\sigma_d^2\left ( \sum_{k=1}^d(h_{1,k}+h_{2,k}) - 2\sum_{k=1}^d k^2\right)=0.\label{exactmqv}
\end{align}
Since the $\sigma_k$'s form a non-increasing sequence, $\sigma_s-\sigma_{s+1}\geq 0$ and 
\begin{align}
\sum_{k=1}^s(h_{1,k}+h_{2,k})  - 2\sum_{k=1}^sk^2\geq 0
\end{align}
for all $s\in[d]$ by Equation \ref{partmqvbnd}. Consequently, Equation \ref{exactmqv} requires that Equation \ref{mqveq} holds for all $s$ such that $\sigma_s\not=\sigma_{s+1}$ or $s=d$. This concludes the proof.
\end{proof}

With Lemma \ref{lem:mqvBnd}, we are now able to prove Theorem \ref{thm:1}.\\

\begin{proof}[Proof of Theorem \ref{thm:1}]
Without loss of generality, the singular value decomposition $X=U\Sigma V^T$ satisfies $U=I$, so $\widetilde{\Phi}$ is just $\Phi$. 

We first note that, for all $s$ such that $\sigma_s\not = \sigma_{s+1}$ or $s=d$, the system of equalities from Lemma \ref{lem:mqvBnd}
\begin{align}
\sum_{k=1}^s (h_{1,k} + h_{2,k}) = 2\sum_{k=1}^s k^2
\end{align}
 is equivalent to the system of equalities
\begin{align}
\sum_{k=k_{q,\min}}^{k_{q,\max}} (h_{1,k} + h_{2,k}) = 2\sum_{k=k_{q,\min}}^{k_{q,\max}}k^2\text{ for all }q\in[p].
\end{align}

First suppose that $\Phi_{[k_{q,\min},k_{q,\max}]}\in\Pi_{[k_{q,\min},k_{q,\max}]}$ for all $q\in[p]$. Then there is a $Q_q\in\mc{O}(2(k_{q,\max}-k_{q,\min}+1))$ such that
\begin{align}
Q_q^T\cdot \Phi_{[k_{q,\min},k_{q,\max}]} = \Psi_{k_{q,\min},k_{q,\max}}
\end{align}
because $Q_q$ preserves norms, we have that
\begin{align}
\sum_{k=k_{q,\min}}^{k_{q,\max}}(h_{1,k} + h_{2,k}) = \left\Vert \frac{d\Phi_{[k_{q,\min},k_{q,\max}]}}{dt} \right\Vert_{L^2}^2 = \left\Vert \frac{d\Psi_{k_{q,\min},k_{q,\max}}}{dt}\right\Vert_{L^2}^2 = 2\sum_{k=k_{q,\min}}^{k_{q,\max}}k^2.
\end{align}
Thus, $\Phi$ is a minimizer by Lemma \ref{lem:mqvBnd}.

To prove the converse, we first note that if $\Phi$ has any component $\phi_{j,k}$ such that $\hat\phi_{j,k}$ does not have finite support, then the inequality (\ref{lb2}) is strict for $s=d$, so Lemma \ref{lem:mqvBnd} indicates that $\Phi$ cannot be a minimizer. 

Thus, if $\Phi$ is a minimizer, then all of the sequences $\hat\phi_{j,k}$ have finite support, and hence there exists a natural number $M$ such that $\hat\Phi_{s,M}$ from Lemma \ref{lem:mqvBnd} satisfies 
\begin{align}
\text{trace}\left( \hat\Phi_{s,M} \mc{Q}_M \hat\Phi_{s,M}^\ast\right) = 2\sum_{k=1}^sk^2,
\end{align}
and therefore Lemma \ref{lem:1} gives a $Q\in \mc{U}(2M)$ such that $Q^\ast \hat\Phi_{s,M}$ has orthonormal rows of eigenvectors of $\mc{Q}_M$ corresponding to the lowest $2s$ eigenvalues of $\mc{Q}_M$. Consequently, we must have that $\Phi_{[1,s]}\in \Pi_{[1,s]}$ for $s=d$ and all $s$ with $\sigma_s\not=\sigma_{s+1}$. Then $\Phi_{[k_{1,\min},k_{1,\max}]}\in \Pi_{[k_{1,\min},k_{1,\max}]}$ and an induction argument establishes the result.

\end{proof}

\subsection{Discussion}

We note that
\begin{align}
 \Vert \Phi[x](t)\Vert \leq \sum_{k\in\mathbb{Z}} \left\Vert\widehat{\Phi[x]}(k)\right\Vert &= \sum_{k\in\mathbb{Z}} \frac{1}{k}\left\Vert k\widehat{\Phi[x]}(k)\right\Vert\\& \leq \sqrt{\sum_{k\in\mathbb{Z}}\frac{1}{k^2}}\sqrt{\sum_{k\in\mathbb{Z}}\left\Vert k\widehat{\Phi[x]}(k)\right\Vert^2}\\
 & = C \left\Vert \frac{d\Phi[x]}{dt}\right\Vert_{L^2},
\end{align}
for all $t\in[0,1]$, so $\Vert \Phi[x]\Vert_{L^\infty}\leq C \left\Vert \frac{d\Phi[x]}{dt}\right\Vert_{L^2}$ for all $x\in\R^d$. Thus, we have that minimization of the mean quadratic variation also squeezes the gap present in the bound
\begin{align}
\Vert \Phi[x_j]\Vert_{L^1} \leq \Vert \Phi[x_j]\Vert_{L^2}\leq \Vert \Phi[x_j]\Vert_{L^\infty}.
\end{align}

The main theorem may be imitated for curves embedded in arbitrary dimension. However, different principal components are identified with similar frequencies for 1D curves, and the assignment of the principal components in the $\R^3$ is staggered. The plots of such curves are still informative, but lack the symmetry enjoyed by the $\R^2$ embeddings. In particular, the $\R^3$ case does not generally enjoy the property that projections onto 1D subspaces of $\R^2$ result in the same 1D MQV. 

This result may also be imitated for other quadratic forms, but now the result will identify PCA components to eigenfunctions of some other operator. For example, the Legendre basis arises from minimizing the discrete Legendre equations, and the Hermite functions come from the Schrodinger-Laplacian.  This may have some relevance with respect to the work \cite{embrechts1991variations}, but we do not explore this possibility here.

Theorem \ref{thm:1} also suggests that the optimal embeddings are related to partial Laurent series with a $0$ constant component under the identification of $\R^2$ with $\mathbb{C}$.

\section{Quadratic phase shifts and the asymptotic tour property}\label{sec:tour}

This section provides the proof of Theorem \ref{thm:2}. We begin with a lemma reminiscent of Theorem 2.7 of \cite{goyal2001quantized}: 
\begin{lemma}\label{lem:perturb}
Suppose $Z=\{z_k\}_{k=1}^d\subset\mathbb{C}$ and $w\in\mathbb{C}$ satisfy
\begin{align}
\sum_{k=1}^d \vert z_k\vert^2 = 2\text{ and }\sum_{k=1}^d z_k^2 = \omega^2,
\end{align}
and set $a_k=\text{Re}(z_k)$ and $b_k=\text{Im}(z_k)$ for $k\in[d]$. Then the singular values of the matrix
\begin{align}
Z=\begin{pmatrix}
a_1 & \cdots &a_d\\
b_1 & \cdots &b_d
\end{pmatrix}
\end{align}
are $\sqrt{1\pm \frac{\vert \omega\vert^2}{2}}$.
\end{lemma}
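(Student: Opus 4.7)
The plan is to reduce the claim to computing the eigenvalues of the $2\times 2$ matrix $ZZ^T$, since the squared singular values of $Z$ are precisely these eigenvalues. First I would write out
\begin{align}
ZZ^T = \begin{pmatrix} \sum_{k=1}^d a_k^2 & \sum_{k=1}^d a_k b_k \\ \sum_{k=1}^d a_k b_k & \sum_{k=1}^d b_k^2 \end{pmatrix}
\end{align}
and then translate the two hypotheses into constraints on these three sums. The identity $|z_k|^2 = a_k^2 + b_k^2$ gives $\sum a_k^2 + \sum b_k^2 = 2$, while $z_k^2 = (a_k^2 - b_k^2) + 2i\, a_k b_k$ gives $\sum(a_k^2 - b_k^2) + 2i \sum a_k b_k = \omega^2$.

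Writing $\omega^2 = p + iq$ with $p,q \in \R$, the real and imaginary parts separate cleanly: $\sum(a_k^2 - b_k^2) = p$ and $\sum a_k b_k = q/2$. Combined with the trace condition $\sum a_k^2 + \sum b_k^2 = 2$, I would solve for $\sum a_k^2 = 1 + p/2$ and $\sum b_k^2 = 1 - p/2$, yielding
\begin{align}
ZZ^T = I_2 + \tfrac{1}{2}\begin{pmatrix} p & q \\ q & -p \end{pmatrix}.
\end{align}
The correction matrix is traceless symmetric with determinant $-(p^2 + q^2)/4$, so its eigenvalues are $\pm \tfrac{1}{2}\sqrt{p^2 + q^2} = \pm \tfrac{1}{2}|\omega^2| = \pm \tfrac{1}{2}|\omega|^2$.

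Therefore $ZZ^T$ has eigenvalues $1 \pm |\omega|^2/2$, and taking square roots gives the claimed singular values $\sqrt{1 \pm |\omega|^2/2}$. There is no real obstacle here: the proof is essentially a direct computation once one recognizes that the hypotheses encode exactly the trace and the off-diagonal-plus-difference of the $2\times 2$ Gram matrix $ZZ^T$. The only mild subtlety is keeping the bookkeeping straight between $\omega^2$ (a complex number) and $|\omega|^2$ (its modulus), since the singular values depend on $|\omega^2| = |\omega|^2$ rather than on the phase of $\omega$.
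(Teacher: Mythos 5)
Your proof is correct, and it reaches the same destination (the eigenvalues of $ZZ^T$) by a slightly different and arguably cleaner route. You solve the two constraints explicitly for the three Gram entries $\sum a_k^2$, $\sum b_k^2$, $\sum a_kb_k$ and write $ZZ^T = I_2 + \tfrac{1}{2}\left(\begin{smallmatrix} p & q \\ q & -p\end{smallmatrix}\right)$, then read off the eigenvalues of the traceless symmetric correction from its determinant. The paper instead writes $\omega = v+iw$, appends the column $(-w,v)^T$ to $Z$ to form $\widehat{Z}$, shows the constraints force $\widehat{Z}\widehat{Z}^T = cI_2$, and recovers $ZZ^T$ as a rank-one downdate $cI_2 - uu^T$ with $\Vert u\Vert^2 = \vert\omega\vert^2$; this ``completion to a tight frame'' viewpoint is what connects the lemma to Theorem 2.7 of the Goyal--Kova\v{c}evi\'c--Kelner reference and also hands you the eigenvectors $(v,w)^T$ and $(-w,v)^T$ explicitly, which your version does not produce (though the statement does not require them). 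Your bookkeeping remark is the one point worth being careful about, and you handle it correctly: only $\vert\omega^2\vert = \vert\omega\vert^2 = \sqrt{p^2+q^2}$ enters, not the phase of $\omega$. One cosmetic note common to both arguments: when $\vert\omega\vert^2 > 2$ the smaller eigenvalue of $ZZ^T$ would be negative, which is impossible for a Gram matrix, so the hypotheses implicitly force $\vert\omega\vert^2 \leq 2$ (in your notation, $\sum b_k^2 = 1 - p/2 \geq 0$ and the Cauchy--Schwarz bound on $\sum a_kb_k$ give exactly this); neither proof needs to say so, but it explains why the square roots are always real.
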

\begin{proof}
We note that 
\begin{align}
A = ZZ^T = \begin{pmatrix}
\sum_{k=1}^d a_k^2 & \sum_{k=1}^d a_k b_k\\
\sum_{k=1}^d a_k b_k & \sum_{k=1}^d b_k^2
\end{pmatrix}
\end{align}
has trace
\begin{align}
\sum_{k=1}^d a_k^2+ \sum_{i=1}^d b_k^2  = \sum_{i=1}^d \vert z_k\vert^2 = 2.
\end{align}
Let $\omega = v + i w$, set
\begin{align}
\widehat{Z} = \begin{pmatrix}
a_1 & \cdots &a_d & -w\\
b_1 & \cdots &b_d & v
\end{pmatrix}
\end{align}
and observe that
\begin{align}
2\left(\sum_{k=1}^d a_k b_k -wv\right) = \text{Im}\left(\sum_{k=1}^d z_k^2 - \omega^2\right) = 0,
\end{align}
and
\begin{align}
\sum_{k=1}^d a_k^2 + w^2 - \sum_{k=1}^d b_k^2 -v^2 = \text{Re}\left( \sum_{k=1}^d z_k^2 - \omega^2\right) = 0
\end{align}
so $\widehat{Z}\widehat{Z}^T$ is a diagonal matrix with a constant diagonal given by $c = \sum_{k=1}^d a_k^2 + w^2$. If $\omega=0$, then $ZZ^T=\widehat{Z}\widehat{Z}^T$ and $\text{trace}(ZZ^T)=2$ gives $ZZ^T=I_2$ and the result follows.

On the other hand, if $\omega\not=0$, then
\begin{align}
ZZ^T = \widehat{Z}\widehat{Z}^T - \begin{pmatrix} -w\\ v\end{pmatrix}\begin{pmatrix} -w\\ v\end{pmatrix}^T= cI_2 - \begin{pmatrix} -w\\ v\end{pmatrix}\begin{pmatrix} -w\\ v\end{pmatrix}^T,
\end{align}
the eigenvectors of $ZZ^T$ are
\begin{align}
\begin{pmatrix} v\\ w\end{pmatrix}, \begin{pmatrix} -w\\ v\end{pmatrix}
\end{align}
with eigenvalues $c$ and $c-\vert\omega\vert^2$. Since $\text{trace}(ZZ^T)=2$, $2 = 2c - \vert\omega\vert^2$, and we conclude that $c = 1 + \frac{\vert\omega\vert^2}{2}$. This verifies that the eigenvalues of $ZZ^T$ are $1 \pm \frac{\vert\omega\vert^2}{2}$, and expression for the singular values of $Z$ follow.
\end{proof}

\begin{proof}[Proof of Theorem \ref{thm:2}]
We first discuss the organization of this proof. We first identify the columns of $\Phi(t)$ with complex numbers $z_k(t)$. Then, by Lemma \ref{lem:perturb}, a bound on the eigenvalues of $\Phi(t)$ follows from a bound on $\vert \sum_{k=1}^d z_k(t)^2\vert$. Equation 2.4 of \cite{paris2014asymptotic} gives an exact expression for this sum, and we provide bounds on all the terms in this expression using elementary techniques to complete the proof.

Let $t\in[0,1]$ and set $z_k=(\phi_{1,k}(t)+i \phi_{2,k}(t))/\sqrt{2}$ for $k\in[d]$.  Since 
\begin{align}
\begin{pmatrix}
\phi_{1,k}(t)\\
\phi_{2,k}(t)
\end{pmatrix} = \begin{pmatrix}
\cos(\pi k^2/2d) & -\sin(\pi k^2/2d)\\
\sin(\pi k^2/2d) & \cos(\pi k^2/2d)
\end{pmatrix}\begin{pmatrix}
\sqrt{2}\cos(2\pi k t)\\
\sqrt{2}\sin(2\pi k t)
\end{pmatrix},
\end{align}
it follows that $\vert z_k\vert^2 = 1$ for all $k\in[d]$, so $\sum_{k=1}^d\vert z_k\vert^2=d$. Moreover, we see that
\begin{align}
z_k = e^{2\pi i \frac{k^2}{4d}}e^{2\pi i kt}.
\end{align}
Then
\begin{align}
\sum_{k=1}^d z_k^2 = \sum_{k=1}^d e^{\pi i \frac{k^2}{d}}e^{2\pi i k(2t)}.
\end{align}
noting that
\begin{align}
e^{2\pi i k(2t)} = e^{2\pi i k(2t - 1)} = e^{2\pi i k (2t-2)}
\end{align}
for all $k\in[d]$, this sum has the form
\begin{align}
S_N(x,\theta) = \sum_{k=1}^N f(k) 
\end{align}
where $f(k) = e^{\pi i x k^2}e^{2\pi i k\theta}$, $N=d$, $x = 1/d \in (0,1)$, and $\theta\in[-1/2, 1/2]$. In the notation of \cite{paris2014asymptotic}, we let $\text{erfc}$ denote the complementary error function defined by the line integral
\begin{align}
\text{erfc}(z) = 1 - \text{erf}(z) = \frac{2}{\sqrt{\pi}}\int_{\vert z\vert}^\infty e^{-w^2}\:dw,
\end{align}
and set 
\begin{align}
E(\tau) = e^{-\pi i \tau^2/x} \text{erfc}\left(\omega \tau \sqrt{\pi/x}\right) = e^{-\pi i d \tau^2} \frac{2}{\sqrt{\pi}}\int_{\tau\sqrt{\pi/x}}^\infty e^{-i s^2}\:ds,
\end{align}
where $\omega = e^{-\pi i/4}$. Equation 2.4 of \cite{paris2014asymptotic} provides the expression
\begin{align}
S_N(x,\theta) = \frac{1}{2}(f(N)-1) + J_N + e^{\pi i/4} (I_N - I_0)
\end{align}
where
\begin{align}
J_N = \frac{e^{\pi i/4}}{2\sqrt{x}}\left(E(\theta) - f(N)E(\xi) \right),
\end{align}
for $\xi = Nx+\theta$ and 
\begin{align}
I_j = \frac{f(j)}{2\sqrt{x}}\sum_{k=1}^\infty \left( E(k-jx-\theta) - E(k+jx+\theta)\right)
\end{align}
for $j=0,N$.

We have that $\left\vert \frac{1}{2}(f(N)-1)\right\vert \leq 1$. Then using the substitution $s = u\sqrt{\pi/2}$ gives
\begin{align}
\left\vert \int_{\tau\sqrt{\pi/x}}^\infty e^{i s^2}\:ds\right\vert &= \left\vert \int_{\tau\sqrt{2/x}}^{\infty} \cos(\pi u^2/2)\:du + i  \int_{\tau\sqrt{2/x}}^{\infty} \sin(\pi u^2/2)\:du\right\vert \leq 2
\end{align}
and
\begin{align}
\vert J_N\vert \leq \frac{2}{\sqrt{x}}
\end{align}
It remains to bound the expressions
\begin{align}
I_j = \frac{f(j)}{2\sqrt{x}}\sum_{k=1}^\infty \left( E(k-jx-\theta) - E(k+jx+\theta)\right)
\end{align}
for $j=0,N$. We only need to consider bound for the sums
\begin{align}
\sum_{k=1}^\infty E(k-\theta) - E(k+\theta)
\end{align}
when $j=0$ and
\begin{align}
\sum_{k=1}^\infty E(k-1-\theta) - E(k+1+\theta)
\end{align}
when $j=N$ (and so $jx= Nx= d(1/d)=1$). We note that $f(0)=1$.

Using the asymptotic expansion in Equation 2.2 of \cite{paris2014asymptotic}\footnote{This cites \cite{olver2010nist} and error bound derivations are from \cite{olver1997asymptotics}.}, with $n=1$, we have that
\begin{align}
E(t) = \frac{1}{\sqrt{\pi}}\frac{\Gamma(0+\frac{1}{2})}{\Gamma(\frac{1}{2})}\left( \frac{ix}{\pi t^2}\right)^{0+\frac{1}{2}}+ T_1(t)= \frac{\sqrt{x}}{\pi} \frac{e^{\pi i/4}}{t} + T_1(t)
\end{align}
where $\vert T_1(t)\vert \leq \frac{\Gamma(1+\frac{1}{2})}{\pi}\left(\frac{x}{\pi t^2} \right)^{1+\frac{1}{2}} = \frac{x^{3/2}}{\pi^2}\frac{1}{t^3}$ and $t>0$. Therefore, for $k\geq 1$,
\begin{align}
E(k-\theta) - E(k+\theta) = \frac{\sqrt{x}}{\pi} e^{\pi i/4}(k-\theta)^{-1} + T_1(k-\theta) -  \frac{\sqrt{x}}{\pi} e^{\pi i/4}(k+\theta)^{-1} - T_1(k+\theta)
\end{align}
which gives
\begin{align}
E(k-\theta) - E(k+\theta) = \frac{\sqrt{x}}{\pi}e^{\pi i/4}\frac{2\theta}{k^2-\theta^2} + T_1(k-\theta) - T_1(k+\theta).
\end{align}
A bound for the remainder term is given by
\begin{align}
\vert T_1(k-\theta) - T_1(k+\theta)\vert \leq \frac{x^{3/2}}{2\pi^2} \left(\frac{1}{(k-\theta)^3} + \frac{1}{(k+\theta)^3}\right).
\end{align}
We have
\begin{align}
\sum_{k=1}^\infty \frac{1}{k^2-\theta^2} \leq \sum_{k=1}^\infty \frac{1}{k^2-\frac{1}{4}} = \sum_{k=1}^\infty \frac{1}{k^2-\frac{1}{4}} = 2
\end{align}
using the inductive formula $\sum_{k=1}^m \frac{1}{k^2-\frac{1}{4}} = \frac{4m}{2m+1}$. This provides the bound 
\begin{align}
\left \vert \sum_{k=1}^\infty \frac{\sqrt{x}}{\pi}e^{\pi i/4}\frac{2\theta}{k^2-\theta^2} \right\vert \leq \frac{2}{\pi}\sqrt{x}.
\end{align}
On the other hand, we have the bound
\begin{align}
\sum_{k=1}^\infty \frac{1}{(k-\theta)^3} + \frac{1}{(k+\theta)^3} &\leq 2 \sum_{k=1}^\infty \frac{1}{(k-\frac{1}{2})^3}
\end{align}
and
\begin{align}
\sum_{k=1}^\infty \frac{1}{(k-\frac{1}{2})^3} &= 8 + \frac{8}{27} + \sum_{k=3}^\infty \frac{1}{(k-\frac{1}{2})^3}\\
&< 8 + \frac{8}{27} + \int_2^\infty \frac{1}{(q - \frac{1}{2})^3}\:dq\\
&< 9.
\end{align}
This yields the bound
\begin{align}
\sum_{k=1}^\infty\vert T_1(k-\theta) - T_2(k+\theta)\vert \leq 9\frac{x^{3/2}}{\pi^2}.
\end{align}

A bound on $I_N$ and the second series follows in a similar fashion, but the asymptotic formula for $E(1-1-\theta)=E(-\theta)$ may not hold, so we use the estimate $\vert E(-\theta) - E(2+\theta) \vert \leq 4$ to avoid the first term. For $k\geq 2$, the terms satisfy
\begin{align}
E(k-1-\theta) - E(k+1+\theta) = \frac{\sqrt{x}}{\pi}e^{\pi i/4}\frac{2\theta}{k^2-(1+\theta)^2} + T_1(k-1-\theta) - T_1(k+1+\theta).
\end{align}
with a bound on the remainder term given by
\begin{align}
\vert T_1(k-1-\theta) - T_1(k+1+\theta)\vert \leq \frac{x^{3/2}}{2\pi^2} \left(\frac{1}{(k-1-\theta)^3} + \frac{1}{(k+1+\theta)^3}\right).
\end{align}
The sum of absolute values of the first terms for $k\geq 2$ is then bounded by $\frac{46}{45\pi}\sqrt{x}$ and the sum of absolute values of the remainder terms is bounded by $9\frac{x^{3/2}}{\pi^2}$. Bringing all these bounds together, we get the bound
\begin{align}
\vert I_0- I_N\vert \leq \frac{1}{2\sqrt{x}} \left(\frac{2}{\pi}\sqrt{x} + 9 \frac{x^{3/2}}{\pi^2} + 4 + \frac{46}{45\pi}\sqrt{x} + 9 \frac{x^{3/2}}{\pi^2}  \right)\leq \frac{1}{2} + x + \frac{2}{\sqrt{x}}.
\end{align}
The final bound is then
\begin{align}
\vert S_N(x,\theta)\vert &\leq \left \vert\frac{1}{2}(f(N)-1)\right\vert + \vert J_N\vert +\vert I_0-I_N\vert\\
& \leq 1 + \frac{2}{\sqrt{x}} + \frac{1}{2} + x + \frac{2}{\sqrt{x}}\\
& = 4\sqrt{d} + \frac{3}{2} + \frac{1}{d}
\end{align}
after the substitution $x=1/d$. Multiplying this bound by $2/d$ corresponds to multiplying all $z_i$ by the constant $\sqrt{\frac{2}{d}}$, which allows us invoke Lemma \ref{lem:perturb} to observe that $\sqrt{\frac{1}{d}}\Phi(t)$ has singular values between 
\begin{align}
\sqrt{1 \pm \left(\frac{4}{\sqrt{d}}+\frac{3}{2d}+\frac{1}{d^2}\right)}.
\end{align}
Since $t$ was arbitrary, we see that this bound holds for all $t$. 
\end{proof}

We note that this asymptotic rate for a uniform bound is optimal. This is because the expectation of the square magnitude satisfies
\begin{align}
\int_0^1\left\vert \sum_{k=1}^d \alpha_k e^{2\pi i k t}\right\vert^2\:dt = \sum_{k=1}^d \vert \alpha_k\vert^2,
\end{align}
so the maximum value of the magnitude is at least $\sqrt{d}$ given $\vert\alpha_k\vert=1$ for all $k\in[d]$.

\subsection{Discussion of quadratic Gauss sums}

The work from \cite{hardy1914some} derives the basic bound for our expression, but \cite{paris2014asymptotic} has a form amenable to expressing the bound explicitly for finite dimension. Lehmer \cite{lehmer1976incomplete} considers incomplete Gauss sums which are related to our approach and has the right form for asymptotics. This paper also has a geometric interpretation of the cancellations in the sum, and illustrates how cumulative generalized Gaussian sums form interesting spiral structures related to Euler spirals.

An elementary derivation of the quadratic Gauss sum is given in \cite{murty2017evaluation}, and it also details the resulting quadratic reciprocity result. The paper \cite{oskolkov1991functional} considers sums of the form required in this paper (generalized Gaussian sums), but the sums are infinite. The textbooks \cite{berndt1998gauss, iwaniec2004analytic} derive expressions for generalized Gaussian sums, but the range of the coefficients is outside of our interest.

The work \cite{bourgain2006gauss} establishes bounds for Gauss sums over additive characters of finite fields, and \cite{demirci2013value} consider limit distributions of Gauss sums. Coutsias and Kazarinoff \cite{coutsias1998approximate} considers a precise bound for the ``functional" approximation formula of Gauss sums. This applies for the small $1/d$ in our case, but works in the continuous setting and excludes the linear terms we need.

\section{Inducing 3D curves from Bishop frames}\label{sec:filament}

To convert the ``braided" 3D Andrews plots to ``bushy" filaments plots, we consider a non-linear map $\Gamma: \left( C^1([0,1])\right)^2\to \left( C^1([0,1])\right)^3$ such that $\Gamma[\phi]$ satisfies
\begin{enumerate}
\item $\Gamma[\phi](0)=0$
\item $\Gamma[\phi]^\prime$ consists of the first row of the solution to the matrix differential equation
\begin{align}
\begin{pmatrix}
d{\bf T}(t)\\
d{\bf N}_1(t)\\
d{\bf N}_2(t)
\end{pmatrix} = \begin{pmatrix}
0 & \phi_1(t) & \phi_2(t)\\
-\phi_1(t) & 0 & 0\\
-\phi_2(t) & 0 & 0
\end{pmatrix} \begin{pmatrix}
{\bf T}(t)\\
{\bf N}_1(t)\\
{\bf N}_2(t)
\end{pmatrix}
\end{align}
subject to the initial condition that 
\begin{align}
\begin{pmatrix}
{\bf T}(0)\\
{\bf N}_1(0)\\
{\bf N}_2(0)
\end{pmatrix} = \begin{pmatrix}
1 & 0 & 0\\
0 & 1 & 0\\
0 & 0 & 1
\end{pmatrix}.
\end{align}
Here $\phi_1$ and $\phi_2$ are the component function of $\phi:[0,1]\to \R^2$. 
\end{enumerate}

If unique Bishop frames always exist, then we can be certain that the map $\Gamma$ is well-defined. The following proposition provides the uniqueness and existence result. The proof is a standard exercise, but we include a sketch here for completeness.

\begin{proposition}
Consider a matrix-valued function $A:[0,1]\to\R^{3\times 3}$ such that
\begin{align}
L = \sup_{t\in[0,1]} \Vert A(t)\Vert
\end{align}
is finite. Then there exists a unique, differentiable $U:[0,1]\to \R^{3\times 3}$ such that
\begin{align}
U^\prime(t) = A(t) U(t)\text{ for all }t\in[0,1], \text{ and }U(0) = I.
\end{align}
In particular, unique solutions exist when $A$ has the form
\begin{align}
A(t) = \begin{pmatrix}
0 & \phi_1(t) & \phi_2(t)\\
-\phi_1(t) & 0 & 0\\
-\phi_2(t) & 0 & 0
\end{pmatrix}
\end{align}
for some $\phi_1,\phi_2\in C^1([0,1])$.

\end{proposition}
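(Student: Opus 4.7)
The plan is to convert the matrix ODE into an equivalent integral equation and then apply a Picard iteration argument in the Banach space $C([0,1],\R^{3\times 3})$ equipped with the supremum norm (or a Bielecki-type weighted norm to make the operator a global contraction in one shot). Specifically, I would first observe that $U$ is a $C^1$ solution to $U'(t) = A(t)U(t)$ with $U(0) = I$ if and only if $U \in C([0,1],\R^{3\times 3})$ satisfies the fixed-point equation
\begin{equation}
U(t) = I + \int_0^t A(s) U(s)\, ds.
\end{equation}
The equivalence uses the fundamental theorem of calculus in one direction and continuity of the integrand in the other (and this is exactly where one recovers the differentiability claim in the conclusion).

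Next, I would define the operator $T : C([0,1],\R^{3\times 3}) \to C([0,1],\R^{3\times 3})$ by $T[U](t) = I + \int_0^t A(s) U(s)\, ds$ and bound, using $\sup_t \Vert A(t)\Vert \leq L$ and submultiplicativity of the Frobenius norm,
\begin{equation}
\Vert T[U](t) - T[V](t)\Vert \leq L \int_0^t \Vert U(s) - V(s)\Vert\, ds.
\end{equation}
Iterating this bound $n$ times in the standard Picard manner yields
\begin{equation}
\Vert T^n[U](t) - T^n[V](t)\Vert \leq \frac{(Lt)^n}{n!}\, \sup_{s\in[0,1]} \Vert U(s) - V(s)\Vert,
\end{equation}
so for $n$ large enough $T^n$ is a contraction on $C([0,1],\R^{3\times 3})$ with the sup norm, and the Banach fixed-point theorem produces a unique fixed point $U$. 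Uniqueness among differentiable solutions follows immediately by the equivalence with the integral equation, and a direct Grönwall argument on $\Vert U(t) - V(t)\Vert$ gives an alternative route to the same conclusion.

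Finally, I would specialize to the stated form
\begin{equation}
A(t) = \begin{pmatrix} 0 & \phi_1(t) & \phi_2(t) \\ -\phi_1(t) & 0 & 0 \\ -\phi_2(t) & 0 & 0 \end{pmatrix},
\end{equation}
and note that $\phi_1, \phi_2 \in C^1([0,1])$ are in particular continuous on the compact interval $[0,1]$, so they are bounded; hence $L = \sup_{t\in[0,1]} \Vert A(t)\Vert < \infty$ and the general result applies. The main (mild) obstacle is really just the bookkeeping of the iterated-integral bound to verify the $(Lt)^n/n!$ estimate cleanly; everything else is routine once the integral reformulation is in place.
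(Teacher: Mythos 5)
Your proposal is correct and follows essentially the same route as the paper: both reformulate the initial value problem as the integral fixed-point equation $U(t) = I + \int_0^t A(s)U(s)\,ds$ and apply the Banach fixed-point theorem, the paper using exactly the Bielecki-type weighted norm $\sup_t e^{-2Lt}\Vert U(t)\Vert$ that you mention as an alternative to iterating $T^n$. The specialization to the skew-symmetric $A(t)$ via boundedness of continuous $\phi_1,\phi_2$ on $[0,1]$ is also identical.
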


\begin{proof}
Consider the function $f(t, U(t)) = A(t) U(t)$.  We then set
\begin{align}
L = \sup_{t\in[0,1]} \Vert A(t)\Vert
\end{align}
where $\Vert \cdot \Vert$ is the spectral norm, and define the norm
\begin{align}
\Vert U\Vert_L = \sup_{t\in[0,1]}e^{-2L t}\Vert U(t)\Vert
\end{align}
for any $U:[0,1]\to \R^{3\times 3}$. Now, define the Picard operator
\begin{align}
\mathcal{P}[U](t) = I + \int_0^t f(s, U(s))\:ds.
\end{align}
This operator is contractive:
\begin{align}
e^{-2Lt}\Vert \mathcal{P}[U](t) - \mathcal{P}[V](t)\Vert &\leq \int_0^t e^{-2L(t-s)} e^{2Ls} L \Vert U(s)-V(s)\Vert\:ds\\
&\leq L \Vert U - V\Vert_L \int_0^t e^{-2L(t-s)}\:ds\\
&\leq \frac{1}{2} \Vert U-V\Vert_L.
\end{align}
Thus, $\Vert \mathcal{P}[U] - \mathcal{P}[V]\Vert_L\leq \frac{1}{2} \Vert U-V\Vert_L$ allows us to use the Banach fixed point theorem to obtain a unique fixed point of the Picard iteration starting at $U^{(0)}(t) = I$. 

Now, if 
\begin{align}
A(t) = \begin{pmatrix}
0 & \phi_1(t) & \phi_2(t)\\
-\phi_1(t) & 0 & 0\\
-\phi_2(t) & 0 & 0
\end{pmatrix}
\end{align}
for some $\phi_1,\phi_2\in C^1([0,1])$, then
\begin{align}
\Vert A(t) \Vert \leq \sqrt{\phi_1(t)^2 + \phi_2(t)^2}
\end{align}
is bounded over $t\in[0,1]$ by the extreme value theorem. Consequently, there is a unique solution to the initial value problem $U^\prime(t) = A(t) U(t)$ with $U(0)=I$. 
\end{proof}

\subsection{Constructing filaments plots for data} With $\Gamma$ defined, we now describe the process for constructing filament plots given a data matrix $X\in \R^{d\times N}$.

\begin{enumerate}
\item Let $\Phi\in \mc{L}(\R^d, \mc{H}^2)$ denote the map obtained from $X$ specified in Theorem \ref{thm:2}.
\item The filament plot for a data point $x$ is then $\Gamma[\Phi[x]]$. 
\end{enumerate}

In particular, we know that $\Gamma$ is defined on the range of $\Phi$ from Theorem \ref{thm:2} since the component functions of $\Phi[x]$ are always real analytic, so on the compact interval $[0,1]$ the functions are all Lipschitz.

The final issue in the construction of the filament plot involves the actual computation of $\Gamma[\phi]$. Production of closed form solutions for 3D moving frames is notoriously difficult, but numerical methods are available. For the examples in this paper, the filaments plots are produced by performing a third-order ``Lie" Runge-Kutta method \cite{crouch1993numerical, iserles2000lie} to numerically construct the Bishop frame. Rotations of the moving frame are computed using Rodrigues's formula for exponentiation of a skew-symmetric matrix. Cumulative sums of the resulting tangents provide the final numerical approximations to $\Gamma[\phi]$.

\subsection{Properties of $\Gamma[\Phi[x]]$}

We now discuss how the properties of the map $\Phi$ translate to properties of the map $\Gamma\circ \Phi$. By construction, the tangent of $\Gamma[\phi]$ is normalized, so the curve is parameterized by arc-length. Moreover, the equations for the Bishop frame specify
\begin{align}
\frac{d}{dt} {\bf T}(t) = \phi_1(t) {\bf N}_1(t) + \phi_2(t) {\bf N}_2(t)
\end{align}
where ${\bf N}_i$ are always orthonormal. Consequently, the curvature function is
\begin{align}
\kappa_{\Gamma[\phi]}(t) = \left\Vert \frac{d}{dt} {\bf T}(t)\right\Vert = \sqrt{\phi_1^2(t)+\phi_2(t)}
\end{align}
for all $t$. 

Now, if $\Phi$ is an isotropic isometry, we have that the total square curvature is 

\begin{align}
\int_0^1 \kappa_{\Gamma[\Phi[x]]}(t)^2\:dt = \Vert \Phi[x]\Vert_{L^2}^2 = 2\Vert x\Vert^2. 
\end{align}
Moreover, the isotropic isometry implies that
\begin{align}
\int_0^1 \kappa_{\Gamma[\Phi[x]]}(t)^2\:dt = \Vert \Phi[x]\Vert_{L^2}^2 = 2\Vert x\Vert^2. 
\end{align}
Therefore, for any fixed $x_0\in\R^d$, we have that 
\begin{align}
\int_0^1 \kappa_{\Gamma[\Phi[x-x_0]]}(t)^2\:dt = 2\Vert x-x_0\Vert^2. 
\end{align}

Consequently, we can use the isotropic isometry property to visually assess divergence from a base point by observing the curvature. Unfortunately, it is much harder to obtain metric comparison results between individual curves after the non-linear transformation based on the isometric isometry property. 

The asymptotic tour property heuristically ensures that the tangents of filaments may exhibit diversity of directionality. While a full analysis of such behavior is beyond the technical scope of this paper, we consider some informal heuristics. For small $\Delta t$, we have that
\begin{align}
U^\prime(\Delta t) \approx A(0) U(\Delta t),
\end{align}
and hence
\begin{align}
U(\Delta t) \approx \text{exp}(\Delta t A(0)) U(0) = \text{exp}(\Delta t A(0)) 
\end{align}
where $\text{exp}$ denotes the matrix exponential and 
\begin{align}
A(0) = \begin{pmatrix}
0 & \phi_1(0) & \phi_2(0)\\
-\phi_1(0) & 0 & 0\\
-\phi_2(0) & 0 & 0
\end{pmatrix}.
\end{align}
The Rodrigues formula gives us

\begin{align}
\text{exp}(\Delta t A(0))=I + \sin(\rho)\begin{pmatrix}
0 & c & s\\
-c & 0 & 0\\
-s & 0 & 0
\end{pmatrix} + (1-\cos(\rho)) \begin{pmatrix}
-1 & 0 & 0\\
0 & -c^2& -cs\\
0 & -cs & -s^2
\end{pmatrix}
\end{align} 
where $\rho = \Delta t \sqrt{\phi_1(0)^2+\phi_2(t)^2}$, $c= \phi_1(0)/\rho$, and $s=\phi_2(0)/\rho$. This gives a perturbative expression for the tangent:
\begin{align}
\begin{pmatrix}
\cos(\rho) &
c\sin(\rho) &
s\sin(\rho)
\end{pmatrix}
\end{align}
The tour property ensures that the perturbative directions of the form 
\begin{align}
\begin{pmatrix}
0 &
c&
s
\end{pmatrix}
\end{align}
 can exhibit independence, leading to a possible diversity of motions in the underlying curves.

Now, what does the minimization of the mean quadratic variation entail in the case of curves for filament plots? Formally, given $\phi_1(t)$ and $\phi_2(t)$, the Frenet-Serret frame of the system has normal vector
\begin{align}
{\bf N}(t) = \frac{\phi_1(t)}{\sqrt{\phi_1(t)^2+\phi_2(t)^2}}{\bf N}_1(t) + \frac{\phi_2(t)}{\sqrt{\phi_1(t)^2+\phi_2(t)^2}}{\bf N}_2(t),
\end{align} 
and binormal vector
\begin{align}
{\bf B}(t) = -\frac{\phi_2(t)}{\sqrt{\phi_1(t)^2+\phi_2(t)^2}}{\bf N}_1(t) + \frac{\phi_1(t)}{\sqrt{\phi_1(t)^2+\phi_2(t)^2}}{\bf N}_2(t).
\end{align} 
When we take a formal derivative of ${\bf B}$, we have 
\begin{align}
\frac{d}{dt}{\bf B}(t) =& -\frac{\phi_2^\prime (t)\left(\phi_1(t)^2+\phi_2(t)^2\right) - \phi_2(t)\left(\phi_1^\prime \phi_1 + \phi_2^\prime\phi_2\right)}{[\phi_1(t)^2+\phi_2(t)^2]^{3/2}}{\bf N}_1(t)\\
&+\frac{\phi_1^\prime (t)\left(\phi_1(t)^2+\phi_2(t)^2\right) - \phi_1(t)\left(\phi_1^\prime \phi_1 + \phi_2^\prime\phi_2\right)}{[\phi_1(t)^2+\phi_2(t)^2]^{3/2}}{\bf N}_2(t)\\
& + \frac{\phi_2(t)\phi_1(t)}{\sqrt{\phi_1(t)^2+\phi_2(t)^2}}{\bf T}(t) - \frac{\phi_1(t)\phi_2(t)}{\sqrt{\phi_1(t)^2+\phi_2(t)^2}}{\bf T}(t).\\
=& - \frac{\phi_2^\prime (t)\phi_1(t) - \phi_2(t)\phi_1^\prime(t)}{\phi_1(t)^2+\phi_2(t)^2} {\bf N}(t)
\end{align}
By definition, the torsion function for the curve $\Gamma[\phi]$ is then
\begin{align}
\tau_{\Gamma[\phi]}(t) = \frac{\phi_2^\prime (t)\phi_1(t) - \phi_2(t)\phi_1^\prime(t)}{\phi_1(t)^2+\phi_2(t)^2}
\end{align}
whenever this quantity is defined. One may then formally verify the identity
\begin{align}
\phi_1^\prime(t)^2 + \phi_2^\prime(t)^2 = \kappa_{\Gamma[\phi]}^\prime(t)^2 + \tau_{\Gamma[\phi]}(t)^2 \kappa_{\Gamma[\phi]}(t)^2
\end{align}

Therefore, minimization of the mean quadratic variation of $\{\Phi[x_i]\}_{i=1}^N$ is formally equivalent to minimizing the sum of the mean quadratic variation of the curvature function and the $L^2$ inner product of the squared torsion function with the squared curvature function. Small quadratic variation of curvature means that the curvature should not change very much. On the other hand, a small value for $\Vert  \tau_{\Gamma[\phi]}^2 \kappa_{\Gamma[\phi]}^2\Vert_{L^2([0,1])}$ precludes sustained periods of large curvature and torsion. Visually, this means that the curve is nearly planar during sustained periods of large curvature.  

To get further intuition on the minimization of this quantity, consider the extreme case where the integral of the square is zero. Then $\kappa_{\Gamma[\phi]}^\prime$ is zero, so $\kappa_{\Gamma[\phi]}^\prime$ is constant. If $\kappa_{\Gamma[\phi]}^\prime$ is zero, then the curve is a line. Otherwise, the torsion $\tau_{\Gamma[\phi]}(t)$ is zero and the curve is a planar circle.

\subsection{Boston dataset visualizations}
The Boston Housing dataset \cite{harrison1978hedonic} provides a 13-dimensional example with a single target for regression (median house price in a given statistical area). Figures \ref{fig:bostonAndrews} and \ref{fig:bostonFilament} consider labeling this dataset according to the decile of the median house price over the dataset. The 3D plots prove difficult to read, but the filament plot visually separates individual points while retaining the expected cluster structure.

\begin{figure}[ht]
\centering
\includegraphics[scale=0.5]{./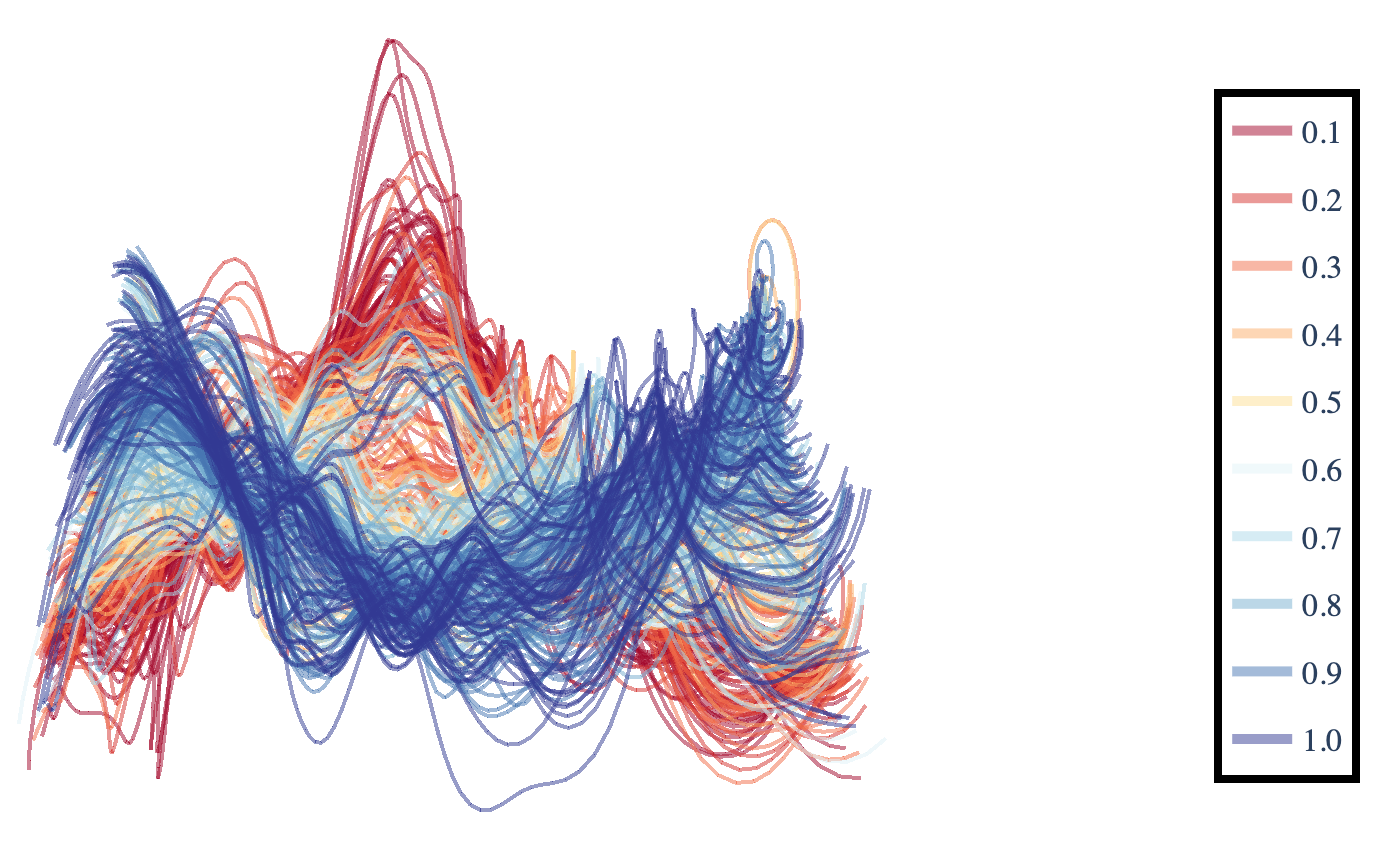}
\caption{The graphs of the 3D Andrews plots for the Boston dataset. Deciles of the median house price provide are used to color the different filaments.}
\label{fig:bostonAndrews}
\end{figure}

\begin{figure}[ht]
\centering
\includegraphics[scale=0.5]{./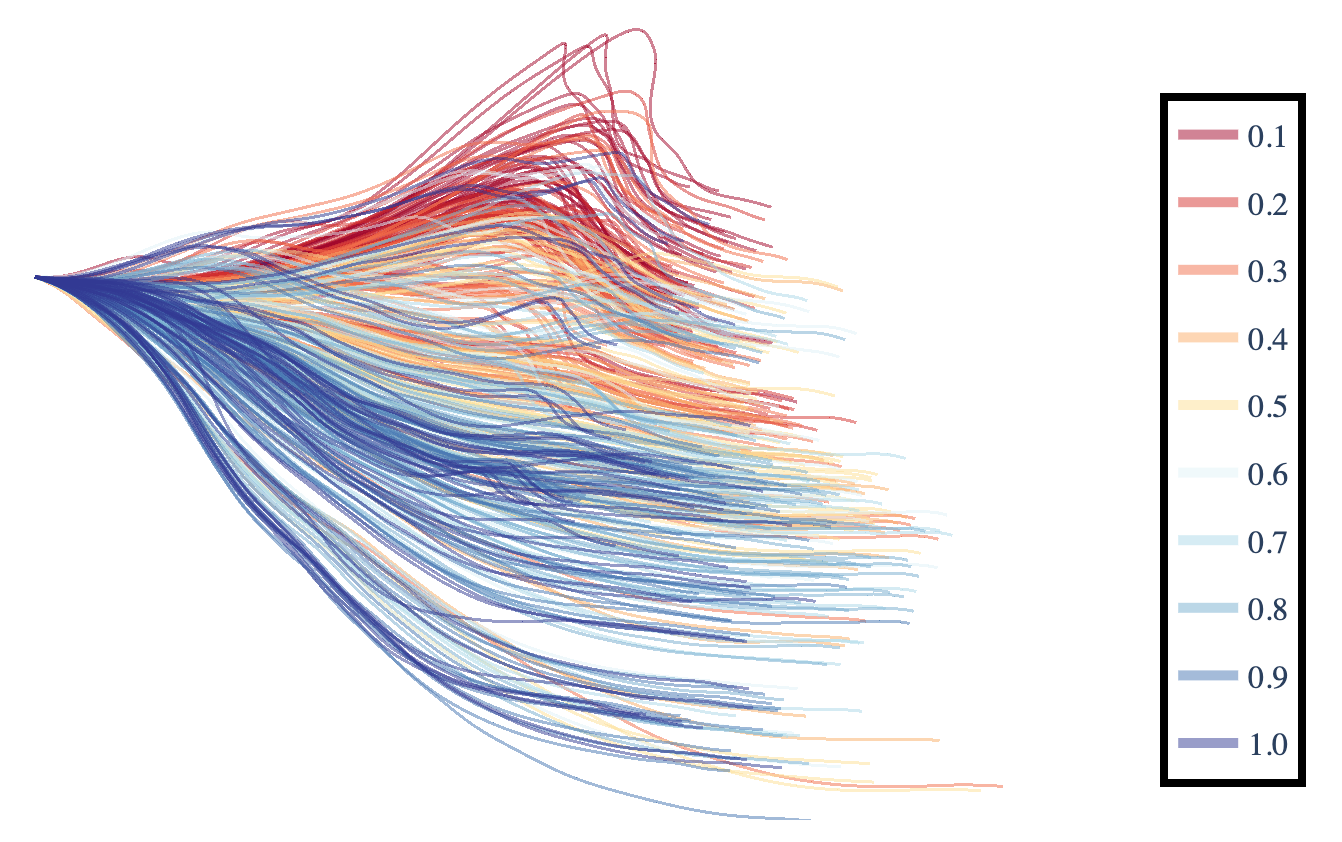}
\caption{The Filament plot for the Boston housing dataset. Note that this is just a single perspective, so interested readers should view the code to see the plot in a ``drag-to-rotate" interface.}
\label{fig:bostonFilament}
\end{figure}

\subsection{Breast cancer dataset visualizations}
Next, we consider the version of the Wisconsin breast cancer dataset \cite{street1993nuclear} provided in the python sklearn package.  This dataset has 30 dimensions and two classes. In contrast to the 3D Andrews plots in Figure \ref{fig:bcAndrews}, the filament plot in Figure \ref{fig:bcFilament} illustrates that the classes are mostly separable with some notable exceptions.

\begin{figure}[ht]
\centering
\includegraphics[scale=0.5]{./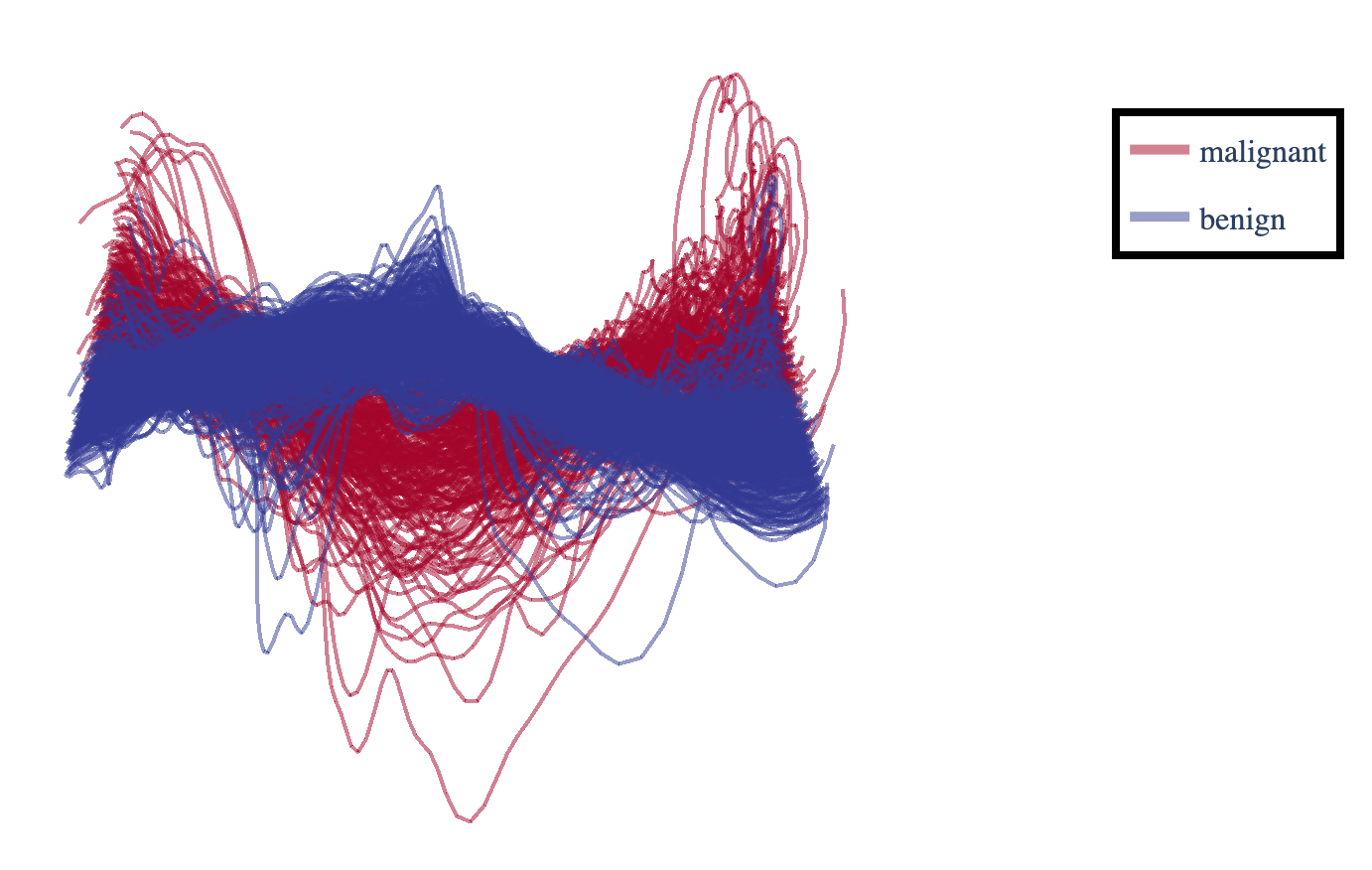}
\caption{The graphs of the 3D Andrews plots for the Wisconsin breast cancer dataset.}
\label{fig:bcAndrews}
\end{figure}

\begin{figure}[ht]
\centering
\includegraphics[scale=0.5]{./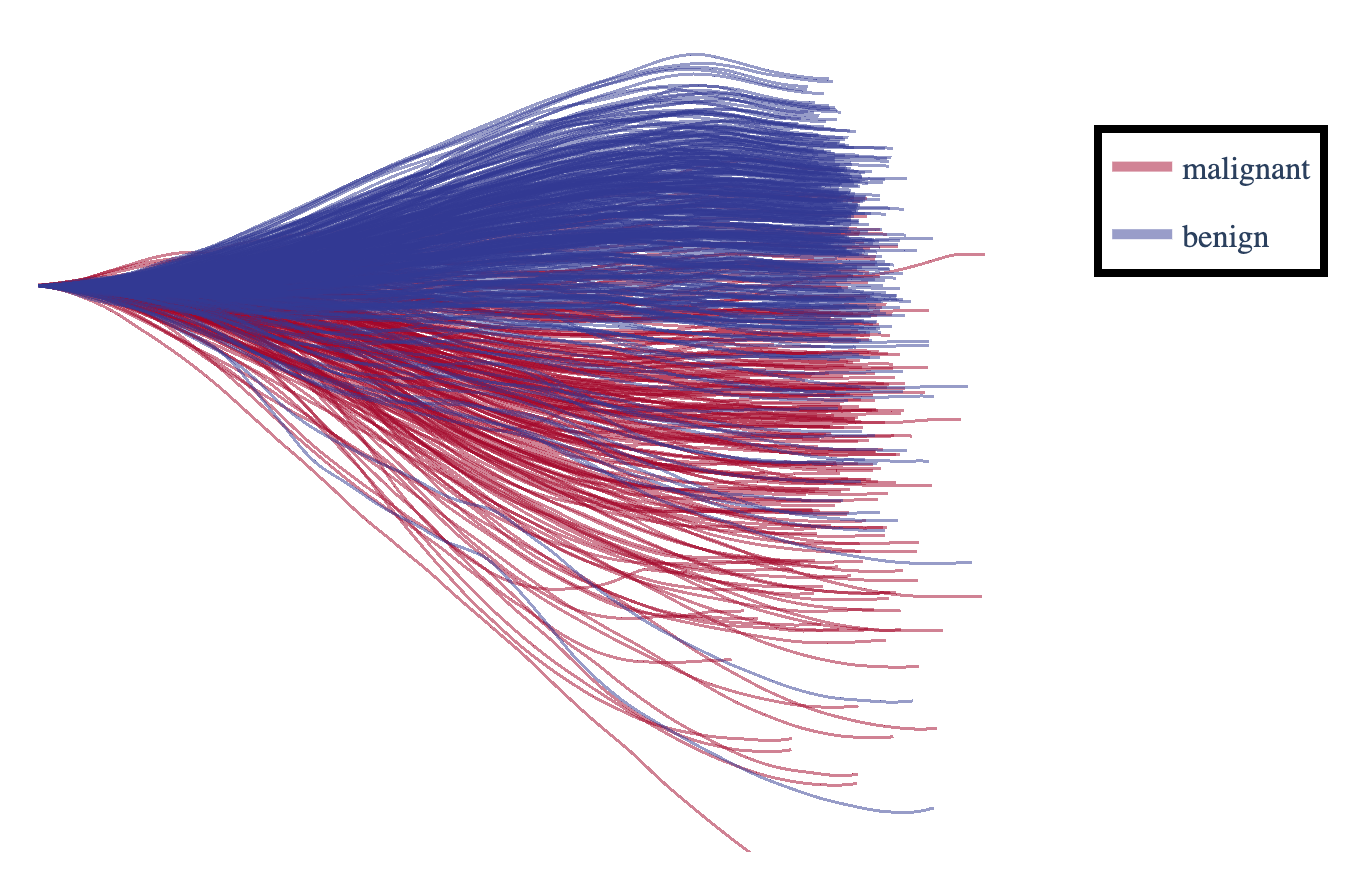}
\caption{The Filament plot for the Wisconsin breast cancer dataset. Note that this is just a single perspective, so interested readers should view the code to see the plot in a ``drag-to-rotate" interface.}
\label{fig:bcFilament}
\end{figure}

\subsection{Digits dataset visualizations}
Our last example is 64 dimensional space \cite{xu1992methods} consisting of $8$ by $8$ pixel images of handwritten digits with 10 total classes and 1797 examples. While the logic of visualizing image datasets seems tortured, images only admit ``small-multiple" visualizations, so scatterplots provide the ability to compare proximity across a larger slice of the dataset. The 3D Andrews plots in Figure \ref{fig:digitsAndrews} are quite difficult to read, but the filament plot in Figure \ref{fig:digitsFilament} separates the data nicely while indicating the proximity of similar classes (for example, $4$, $7$, and $9$ have interesting overlapping structure).

\begin{figure}[ht]
\centering
\includegraphics[scale=0.5]{./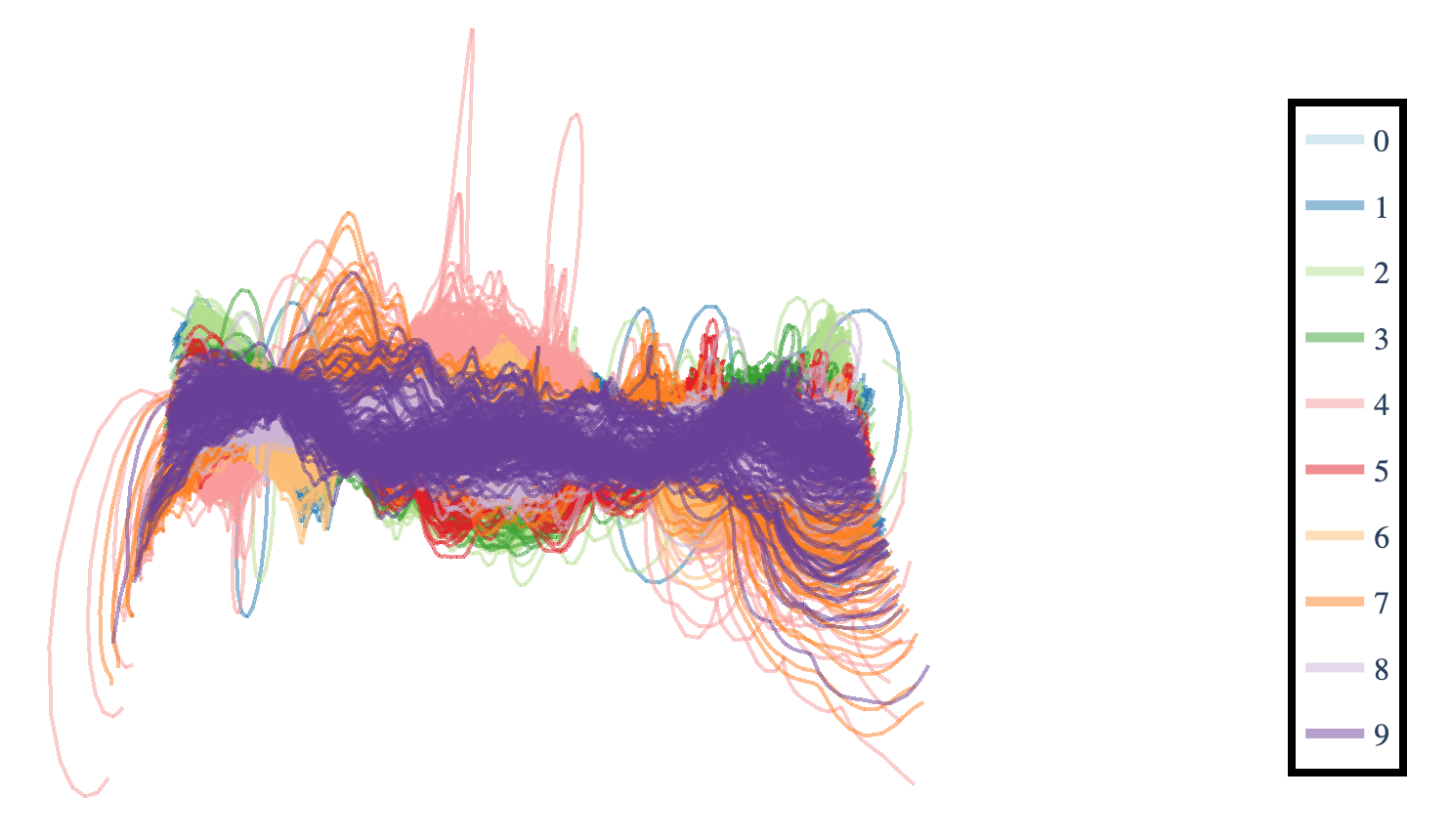}
\caption{The graphs of the 3D Andrews plots for the digits dataset.}
\label{fig:digitsAndrews}
\end{figure}

\begin{figure}[ht]
\centering
\includegraphics[scale=0.5]{./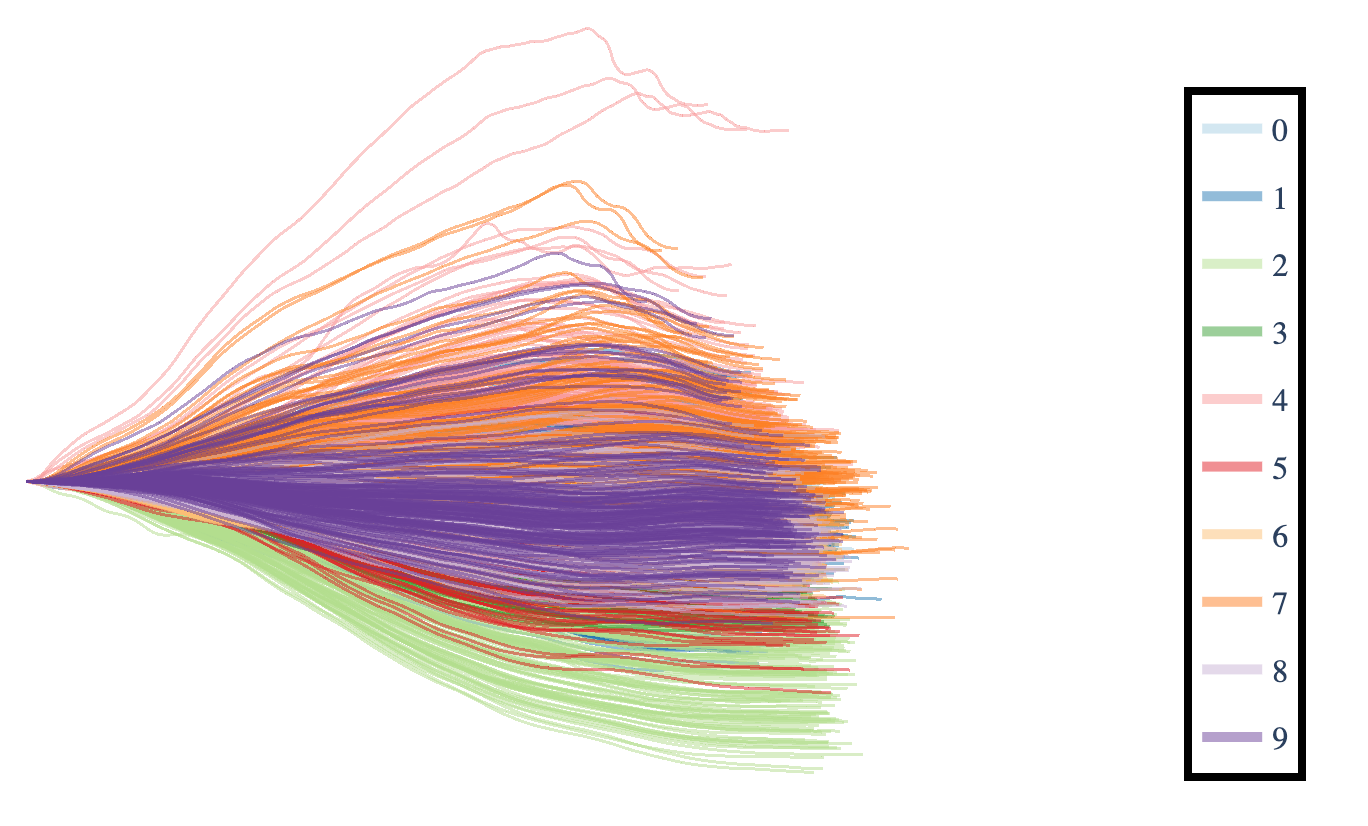}
\caption{The Filament plot for the digits dataset. Note that this is just a single perspective, so interested readers should view the code to see the plot in a ``drag-to-rotate" interface. }
\label{fig:digitsFilament}
\end{figure}

\section{Conclusions}
\label{sec:conclusion}

There are several additional question that we do not address in this work, but which are interesting topics for future exploration.\\

\paragraph{$L^1$ and $L^\infty$ gap} While the $L^1$ and $L^\infty$ bounds provide useful visual heuristics, the gap between these bounds grows like $\sqrt{d}$ for our embeddings. This follows from various resolutions of the Littlewood conjecture \cite{trigub2003lower} in the $L^1$ case and Erdos \cite{erdos1962inequality} indicates the sharpness of the result in the $L^\infty$ case. We leave it as an open problem to construct embeddings which shrink this gap. Additionally, we leave it as an open problem to construct data embeddings into curves that result in comparison bounds in the Hausdorff metric.\\

\paragraph{Robust visualizations} The form of solutions in Theorem \ref{thm:1} relies on the SVD (which is equivalent to PCA after mean shifting), and is therefore subject to robustness issues. While variants of the Davis-Kahan Theorem \cite{davis1970rotation} indicate some measure of stability for PCA eigenspaces, in general the instability of the eigenvectors translates to instability of frequencies for these 3D Andrews plots. Therefore, any visual assessment based on frequency information may be suspect. While we do not directly address these issues, we shall consider ways to mitigate them in future work, and it should also be noted that the isometry property ensures that visual interpretations in terms of the $L^1$-$L^\infty$ bounds are robust.\\

\paragraph{Embeddings of metric spaces into $L^\infty$} The $\sqrt{d}$ gap mentioned above and the result of \cite{matouvsek1990bi} on $O(N^{2/d}\log^{3/2}(N))$-distortion embeddings of $N$-point metric spaces into $\R^d$ suggests that $N$ point metric spaces may be embedded into spaces of functions with a $L^1$-$L^\infty$ gap on the order of $O(\log^{5/4}(N))$. \cite{linial1995geometry} indicates that certain expander graphs cannot embed without $\Omega(\log(N))$ distortion for $\ell_p^d$ with $1\leq p\leq 2$ no matter the dimension of the target space. However, the same paper indicates in Lemma 3.1 that we can embed any finite metric space on $N$ points into $\ell_\infty^N$. This suggests that we may embed into spaces of curves under the Hausdorff or $L^\infty$ distance with no distortion, but possibly at the cost of smoothness. 

On the other hand, work such as \cite{badoiu2005low, matouvsek2010inapproximability} indicates that many of these low-distortion embeddings are difficult to compute. The method we present only requires an SVD and simple numerical integration.\\

\paragraph{Closed curve problem} Because the linear component of a space curve exhibits the most visual impact, the removal of this component generally reduces biases that would appear in visualizations. Thus, it is interesting to consider embeddings into spaces of closed curves. This may be done by simply taking the image of a periodic curve, but if our goal is to embed into curvatures that induce closed curves, we must confront the \emph{closed curve problem}. 

The closed curve problem seeks necessary and sufficient conditions on the curvature (and torsion) functions for closed curves. While \cite{arroyo2008periodic} resolves the problem for plane curves in many cases, little progress has been made on closed space curves due to the substantial increase in complexity.

\paragraph{Machine learning applications} This paper only consider data visualization applications, but filament plots may provide a useful transformation that may be coupled with functional regression/classification techniques, or even convolutional neural networks to perform regression or classification. Such an exploration lies outside the scope of this paper, but we look forward to investigating the utility of such procedures in the future.

\section*{Acknowledgements}
We would like to thank Radu Balan for discussions involving solutions to the Bishop frame equations. We would also like to thank anonymous reviewers for several suggestions that greatly improved the quality of this paper and the clarity of the proofs.

\vskip 0.2in
\bibliographystyle{plain}
\bibliography{filaments}

\end{document}